\tikzstyle{location} = [
\tikzstyle{tran}  = [
\tikzstyle{nchoice}  = [
\tikzstyle{pchoice}  = [
\tikzstyle{dec}   = [inner sep=0pt]
\tikzstyle{mode}  = [
\newcommand{\Rset}{\mathbb{R}}
\newcommand{\Nset}{\mathbb{N}}
\newcommand{\Zset}{\mathbb{Z}}
\newcommand{\ap}{\mbox{\sl AP}}
\newcommand{\opt}{\mbox{\sl opt}}
\newcommand{\clocks}{\mathcal{X}}
\newcommand{\val}[1]{\mbox{\sl Val}\left(#1\right)}
\newcommand{\reset}[2]{{#1}{\left[#2:=0\right]}}
\newcommand{\add}[2]{{#1}{+}{#2}}
\newcommand{\zero}{\mathbf{0}}
\newcommand{\true}{\mathbf{true}}
\newcommand{\false}{\mathbf{false}}
\newcommand{\supp}[1]{{\mbox{\sl supp}}(#1)}
\newcommand{\dist}[1]{{\mathcal{D}}{\left(#1\right)}}
\newcommand{\sat}[1]{{\llbracket}{#1}{\rrbracket}}
\newcommand{\pta}{\mathcal{C}}
\newcommand{\locs}{L}
\newcommand{\loc}{\ell}
\newcommand{\acts}{\mbox{\sl Act}}
\newcommand{\inv}{\mbox{\sl inv}}
\newcommand{\enab}{\mbox{\sl enab}}
\newcommand{\prob}{\mbox{\sl prob}}
\newcommand{\lbfunc}{\mathcal{L}}
\newcommand{\clcons}[1]{\mbox{\sl CC}\left(#1\right)}
\newcommand{\states}{S}
\newcommand{\trans}{\rightarrow}
\newcommand{\tran}[3]{{#1}{\xrightarrow{#2}}{#3}}
\newcommand{\probk}{\mathbf{P}}
\newcommand{\fnpaths }[1]{\mbox{\sl Paths}^*_{#1}}
\newcommand{\infpaths}[1]{\mbox{\sl Paths}^\omega_{#1}}
\newcommand{\initloc }[1]{\mbox{\sl init}\left(#1\right)}
\newcommand{\lastloc }[1]{\mbox{\sl last}\left(#1\right)}
\newcommand{\fnpath}{\rho}
\newcommand{\infpath}{\pi}
\newcommand{\length}[1]{\left| #1 \right|}
\newcommand{\probm}{\mathbb{P}}
\newcommand{\cyl}{\mbox{\sl Cyl}}
\newcommand{\dta}{\mathcal{A}}
\newcommand{\dtloc}{q}
\newcommand{\dtclocks}{\mathcal{Y}}
\newcommand{\cstates}{Q}
\newcommand{\alphabet}{\Sigma}
\newcommand{\rules}{\Delta}
\newcommand{\trfunc}{\kappa}
\newcommand{\dtatr}[3]{{#1}{\xRightarrow{#2}}{#3}}
\newcommand{\run}[3]{{#1}_{#2}\left(#3\right)}
\newcommand{\iconfig}{\left(\dtloc^*,\mathbf{0}\right)}
\newcommand{\product}[2]{{#1}{\otimes}{#2}}
\newcommand{\pr}[2]{\mathfrak{p}_{#1}^{#2}}
\newcommand{\pfunc}{\mathcal{T}}
\newcommand{\sfunc}{\theta}
\newcommand{\exttuples}{\mathcal{T}}
\newcommand{\infset}[1]{\mbox{\sl inf}  ( #1 )}
\newcommand{\trace }[1]{\mbox{\sl trace}( #1 )}
\newcommand{\traj  }[1]{\mbox{\sl traj} ( #1 )}
\newcommand{\rabin}{\mathcal{F}}
\newcommand{\fstates}{F}
\newcommand{\Lang}[2] {
    \mbox{\sl AccPaths}
        _{#1}
        ^{#2}
}
\newcommand{\rabinp}[1]{\mbox{\sl RPaths}_{#1}}
\newcommand{\LangCsAqF}{
    \Lang
        {\pta,\sigma}
        {\dta,\dtloc}
}
\newcommand{\TLang}{
    \pfunc\left(
        \LangCsAqF
    \right)
}
\newcommand{\TAcc}{
    \rabinp{\sfunc\left(\sigma\right)}
}
\newcommand{\accept}[2]{
    \mbox{\bf ACC} \left(
      {#1},
      {#2}
    \right)
}
\newcommand{\acccept}[3]{
    \mbox{\bf ACC} \left(
      {#1},
      {#2},
      {#3}
    \right)
}
\newcommand{\Region}[1]{
  \mbox{Reg}[
    {#1}
  ]
}
\newcommand{\WAIT}[0]{\mbox{\sl WAIT}}
\newcommand{\WORK}[1]{\mbox{\sl WORK}_{#1}}
\newcommand{\DONE}[1]{\mbox{\sl DONE}_{#1}}
\newcommand{\request}[1]{\mbox{\sl Req}_{#1}}
\newcommand{\p}[1]{\mbox{\sl p}_{#1}}
\def\INIT{  \mbox{\sl INIT} }
\def\FAIL{  \mbox{\sl FAIL} }
\newcommand{\q}[1]{\mbox{\sl q}_{#1}}
\newcommand{\location}[5]{
    \node[location] (#1) at #2 {
        \begin{tabular}{c}
            \ensuremath{#3} \\
            \hline
            \ensuremath{#4} \\
            \ensuremath{#5}
        \end{tabular}
    }
}
\newcommand{\PairV}[2]{#1#2}
\newcommand{\PairS}[2]{({#1},{#2})}
\newcounter{row}
\newcounter{col}
\newcommand\setrow[3]{
  \setcounter{col}{1}
  \foreach \n in {#1, #2, #3} {
    \edef\x{\value{col} - 0.5}
    \edef\y{3.5 - \value{row}}
    \node[anchor=center] at (\x, \y) {\n};
    \stepcounter{col}
  }
  \stepcounter{row}
}
\newcommand{\nta}{\mathcal{A}}
\newcommand{\qinit}{q_{\mbox{\sl\scriptsize init} }}
\newcommand{\qstart}{q_{\mbox{\sl\scriptsize start}}}
\newcommand{\ntaap}[1]{b_{#1}}
\newcommand{\tra}{\mathcal{A}}
\begin{document}
\title{Verifying Probabilistic Timed Automata Against \\Omega-Regular Dense-Time Properties}

\author{
    Hongfei Fu\inst{1} 
    \and 
    Yi Li\inst{2} 
    \and 
    Jianlin Li\inst{3,4}
}
\institute{
    Shanghai Jiao Tong University, Shanghai, China
    \and
    Department of Informatics, School of Mathematical Sciences, Peking University, Beijing, China 
    \and
    State Key Laboratory of Computer Science, Institute of Software, Chinese Academy of Sciences, Beijing, China
    \and
    College of Computer Science and Technology, Nanjing University of Aeronautics and Astronautics, Nanjing, China
}

\maketitle

\vspace{-2em}

\begin{abstract}
Probabilistic timed automata (PTAs) are timed automata (TAs) extended with discrete probability distributions.
They serve as a mathematical model for a wide range of applications that involve both stochastic and timed behaviours.
In this work, we consider the problem of model-checking linear \emph{dense-time} properties over {PTAs}.
In particular, we study linear dense-time properties that can be encoded by TAs with infinite acceptance criterion.
First, we show that the problem of model-checking PTAs against deterministic-TA specifications can be solved through a product construction.
Based on the product construction, we prove that the computational complexity of the problem with deterministic-TA specifications is EXPTIME-complete.
Then we show that when relaxed to general (nondeterministic) TAs, the model-checking problem becomes undecidable.
Our results substantially extend state of the art with both the dense-time feature and the nondeterminism in TAs.
\end{abstract}

\vspace{-1em}

\vspace{-2em}
\section{Introduction}
\vspace{-1em}
Stochastic timed systems are systems that exhibit both timed and stochastic behaviours.
Such systems play a dominant role in many applications~\cite{DBLP:books/daglib/0020348}, hence
addressing fundamental issues such as safety and performance over these systems are important.
\emph{Probabilistic timed automata} (PTAs)~\cite{DBLP:journals/fmsd/NormanPS13,DBLP:journals/tcs/Beauquier03,DBLP:journals/tcs/KwiatkowskaNSS02} serve as a good mathematical model for these systems.
They extend the well-known model of timed automata~\cite{DBLP:journals/tcs/AlurD94} (for nonprobabilistic timed systems) with discrete probability distributions, and Markov Decision Processes (MDPs)~\cite{PutermanMDP} (for untimed probabilistic systems) with timing constraints.

Formal verification of PTAs has received much attention in recent years~\cite{DBLP:journals/fmsd/NormanPS13}.
For branching-time model-checking of PTAs, the problem is reduced to computation of reachability probabilities over MDPs through well-known finite abstraction for timed automata (namely \emph{regions} and \emph{zones})~\cite{JensenPTA,DBLP:journals/tcs/Beauquier03,DBLP:journals/tcs/KwiatkowskaNSS02}.
Advanced techniques for branching-time model checking of PTAs such as inverse method and symbolic method have been further explored in  ~\cite{DBLP:journals/fmsd/AndreFS13,DBLP:journals/iandc/KwiatkowskaNSW07,DBLP:conf/formats/KwiatkowskaNP09,DBLP:conf/formats/JovanovicKN15}.
Extension with \emph{cost} or \emph{reward}, resulting in \emph{priced} PTAs, has also been well investigated.
Jurdzinski~\emph{et al.}~\cite{DBLP:conf/concur/JurdzinskiKNT09} and Kwiatkowska~\emph{et al.}~\cite{DBLP:journals/fmsd/KwiatkowskaNPS06} proved that several notions of accumulated or discounted cost are computable over priced PTAs, while
cost-bounded reachability probability over priced PTAs is shown to be undecidable by Berendsen \emph{et al.}~\cite{DBLP:conf/tamc/BerendsenCJ09}.
Most verification algorithms for PTAs have been implemented in the model checker PRISM~\cite{DBLP:conf/cav/KwiatkowskaNP11}. Computational complexity of several verification problems for PTAs has been studied, for example, \cite{DBLP:journals/ipl/LaroussinieS07,DBLP:journals/lmcs/JurdzinskiSL08,DBLP:conf/concur/JurdzinskiKNT09}.

For linear-time model-checking, much less is known.
As far as we know, the only relevant result is by Sproston~\cite{DBLP:conf/qest/Sproston11} who proved that the problem of model-checking PTAs against linear \emph{discrete-time} properties encoded by \emph{untimed} deterministic omega-regular automata (e.g., Rabin automata) can be solved by a product construction.
In his paper, Sproston first devised a production construction that produces a PTA out of the input PTA and the automaton;
then he proved that the problem can be reduced to omega-regular verification of MDPs through maximal end components.

In this paper, we study the problem of model-checking linear \emph{dense-time} properties over PTAs.
Compared with discrete-time properties, dense-time properties take into account timing constraints, and therefore is more expressive and applicable to time-critical systems.
Simultaneously, verification of dense-time properties is more challenging since it requires to involve timing constraints.
The extra feature of timing constraints also brings more theoretical difficulty, e.g.,
timed automata~\cite{DBLP:journals/tcs/AlurD94} (TAs) are generally not determinizable,
which is in contrast to untimed automata (such as Rabin or Muller automata).

We focus on linear dense-time properties that can be encoded by TAs.
Due to the ability to model dense-time behaviours,
TAs can be used to model real-time systems, while they can also act as language recognizers for timed omega-regular languages.
Here we treat TAs as language recognizers for timed paths from a PTA, and study
the problem of computing the minimum or maximum probability that a timed path from the PTA is accepted by the TA.
The intuition is that a TA can recognize the set of ``good'' (or ``bad'') timed paths emitting from a PTA,
so the problem is to compute the probability that the PTA behaves in a good (or bad) manner.
The relationship between TAs and linear temporal logic (e.g., Metric Temporal Logic~\cite{DBLP:journals/rts/Koymans90}) is studied in~\cite{DBLP:conf/lics/OuaknineW05,DBLP:journals/jacm/AlurFH96}.

\smallskip
\noindent{\em Our Contributions.} We distinguish between the subclass of \emph{deterministic} TAs (DTAs) and general \emph{nondeterministic} TAs.
DTAs are the deterministic subclass of TAs.
Although the class of DTAs is weaker than general timed automata, it can recognize a wide class of formal timed languages, and express interesting linear dense-time properties which cannot be expressed in branching-time logics (cf.~\cite{DBLP:journals/tse/DonatelliHS09}).
We consider Rabin acceptance condition as the infinite acceptance criterion for TAs.
We first show that the problem of model-checking PTAs against DTA specifications with Rabin acceptance condition
can be solved through a nontrivial product construction which tackles the integrated feature of timing constraints and randomness. From the product construction, we further prove that the problem is EXPTIME-complete.
Then we show that the problem becomes undecidable when one considers general TAs.
Our results substantially extend previous ones (e.g.~\cite{DBLP:conf/qest/Sproston11}) with both the dense-time feature and the nondeterminism in TAs.

Due to lack of space, detailed proofs of several results are put in the appendix.

\vspace{-1.5em}
\section{Preliminaries}
\vspace{-1em}

We denote by $\Nset$, $\Nset_0$, $\Zset$, and $\Rset$ the sets of all positive
integers, non-negative integers, integers and real numbers, respectively.
For any infinite word $w=b_0b_1\dots$ over an alphabet $\Sigma$, we denote by $\infset{w}$ the set of symbols 
in $\Sigma$ that occur infinitely often in $w$.

%
%
A \emph{clock} is a variable for a nonnegative real number. Below we fix a finite set $\clocks$ of clocks.

\smallskip \noindent{\em Clock Valuations.}
A \emph{clock valuation} is a function $\nu:\clocks\rightarrow [0,\infty)$. The set of clock valuations
is denoted by $\val{\clocks}$.
Given a clock valuation $\nu$, a subset $X\subseteq\clocks$ of clocks and a non-negative real number $t$, we let (i) $\reset{\nu}{X}$ be the clock valuation such that $\reset{\nu}{X}(x)=0$ for $x\in X$ and $\reset{\nu}{X}(x)=\nu(x)$ otherwise, and (ii) $\add{\nu}{t}$ be the clock valuation such that $(\add{\nu}{t})(x)=\nu(x)+t$ for all $x\in\clocks$.
We denote by $\zero$ the clock valuation such that $\zero(x)=0$ for $x\in\clocks$.

\smallskip \noindent{\em Clock Constraints.} The set $\clcons{\clocks}$ of \emph{clock constraints}  over $\clocks$ is generated by the following grammar:
$
\phi:=~\true~\mid~x\le d~\mid ~c\le x~\mid~x+c\le y+d~\mid ~\neg\phi~\mid~\phi\wedge\phi
$
where $x,y\in\clocks$ and $c,d\in\Nset_0$.
We write $\false$ for a short hand of $\neg\true$.
The satisfaction relation $\models$ between valuations $\nu$ and clock constraints $\phi$ is defined through substituting every $x\in\clocks$ appearing in $\phi$ by $\nu(x)$ and standard semantics for logical connectives.
For a given clock constraint $\phi$, we denote by $\sat{\phi}$ the set of all clock valuations that satisfy $\phi$.

\vspace{-1.5em}
\subsection{Probabilistic Timed Automata}
\vspace{-0.6em}

A \emph{discrete probability distribution} over a countable non-empty set $U$ is a function $q:U\rightarrow[0,1]$ such that $\sum_{z\in U}q(z)=1$.
The \emph{support} of $q$ is defined as $\supp{q}:=\{z\in U\mid q(z)>0\}$.
We denote the set of discrete probability distributions over $U$ by $\dist{U}$.

\vspace{-0.6em}
\begin{definition}[Probabilistic Timed Automata~\cite{DBLP:journals/fmsd/NormanPS13}]
A \emph{probabilistic timed automaton} (PTA) $\pta$ is a tuple
\begin{equation}\label{eq:pta}
\pta=\left(\locs, \loc^*, \clocks, \acts, \inv, \enab,  \prob, \ap, \lbfunc\right)
\end{equation}
where :
\begin{compactitem}
\item $\locs$ is a finite set of \emph{locations};
\item $\loc^*\in\locs$ is the \emph{initial} location;
\item $\clocks$ is a finite set of \emph{clocks};
\item $\acts$ is a finite set of \emph{actions};
\item $\inv:\locs\rightarrow\clcons{\clocks}$ is
an \emph{invariant condition};
\item $\enab:\locs\times\acts\rightarrow\clcons{\clocks}$ is an \emph{enabling condition};
\item $\prob:\locs\times\acts\rightarrow\dist{2^{\clocks}\times\locs}$ is a \emph{probabilistic transition function};
\item $\ap$ is a finite set of \emph{atomic propositions};
\item $\lbfunc:\locs\rightarrow 2^{\ap}$ is a \emph{labelling function}.
\end{compactitem}
\end{definition}
%
W.l.o.g, we consider that both $\acts$ and $\ap$ is disjoint from $[0,\infty)$. Below we fix a PTA $\pta$.
The semantics of PTAs is as follows.

\smallskip \noindent{\em States and Transition Relation.}
A \emph{state} of $\pta$ is a pair $(\loc, \nu)$ in $\locs\times\val{\clocks}$ such that $\nu\models \inv(\loc)$.
The set of all states is denoted by $\states_\pta$.
The \emph{transition relation} $\trans$ consists of all triples $((\loc,\nu),a,(\loc',\nu'))$ satisfying
the following conditions:
\begin{compactitem}
\item $(\loc,\nu), (\loc',\nu')$ are states and $a\in\acts\cup [0,\infty)$;
\item if $a\in [0,\infty)$ then $\nu+\tau\models \inv(\loc)$ for all $\tau\in [0, a]$ and $(\loc',\nu')=(\loc,\nu+a)$;
\item if $a\in\acts$ then $\nu\models\enab(\loc,a)$ and there exists a pair $(X, \loc'')\in\supp{\prob(\loc,a)}$ such that $(\loc',\nu')=(\loc'',\reset{\nu}{X})$.
\end{compactitem}
By convention, we write $\tran{s}{a}{s'}$ instead of $(s,a,s')\in\trans$.
We omit `$\pta$' in `$\states_\pta$' if the underlying context is clear.

\smallskip \noindent{\em Probability Transition Kernel.}
The \emph{probability transition kernel} $\probk$ is the function $\probk:\states\times\acts\times\states\rightarrow[0,1]$ such that
\vspace{-0.3cm}
\begin{align*}
    & \probk((\loc,\nu),a,(\loc',\nu'))
        = 
        & \begin{cases}
            1 &
                \mbox{if }\tran{(\loc,\nu)}{a}{(\loc',\nu')}\mbox{ and } a\in [0,\infty)\\
            \sum_{Y\in B}\prob(\loc,a)(Y,\loc') &
                \mbox{if }\tran{(\loc,\nu)}{a}{(\loc',\nu')}\mbox{ and } a\in\acts \\
            0 &
                \mbox{otherwise}
        \end{cases}
\end{align*}
where $B:=\{X\subseteq\clocks\mid \nu'=\reset{\nu}{X}\}$.

\smallskip \noindent{\em Well-formedness.} We say that $\pta$ is \emph{well-formed} if for every state $(\loc,\nu)$ and action $a\in\acts$ such that $\nu\models\enab(\loc,a)$ and every $(X,\loc')\in \supp{\prob(\loc,a)}$, one has that $\reset{\nu}{X}\models\inv(\loc')$.
The well-formedness is to ensure that when an action is enabled, the next state after taking this action will always be legal. In the following, we always assume that the underlying PTA is well-formed. Non-well-formed PTAs can be repaired into well-formed PTAs~\cite{DBLP:journals/iandc/KwiatkowskaNSW07}.

\smallskip \noindent{\em Paths.}
A \emph{finite path} $\fnpath$ (under $\pta$) is a finite sequence
$
\left\langle s_0,a_0,s_1,\dots,a_{n-1},s_n\right\rangle~~(n\ge 0)
$
in
$\states\times{\left((\acts\cup[0,\infty))\times \states\right)}^*$
such that (i) $s_0=(\loc^*,\zero)$,
(ii) $a_{2k}\in [0,\infty)$ (resp. $a_{2k+1}\in \acts$) for all integers $0\le k\le \frac{n}{2}$ (resp. $0\le k\le \frac{n-1}{2}$) and
(iii) for all $0\le k\le n-1$, $\tran{s_k}{a_k}{s_{k+1}}$.
The length $\length{\fnpath}$ of $\fnpath$ is defined by $\length{\fnpath}:=n$.
An \emph{infinite path} (under $\pta$) is an infinite sequence
$
\left\langle s_0,a_0,s_1,a_1,\dots\right\rangle
$
in
${\left(\states\times(\acts\cup[0,\infty))\right)}^\omega$
such that for all $n\in\Nset_0$, the prefix $\left\langle s_0,a_0,\dots,a_{n-1},s_n\right\rangle$ is a finite path.
The set of finite (resp. infinite) paths  under $\pta$ is denoted by $\fnpaths{\pta}$ (resp. $\infpaths{\pta}$).

\smallskip \noindent{\em Schedulers.}
A \emph{(deteterministic) scheduler} is a function $\sigma$ from the set of finite paths into $\acts\cup [0,\infty)$ such that for all finite paths $\fnpath=s_0a_0\dots s_n$,
(i) $\sigma(\fnpath)\in\acts$ (resp. $\sigma(\fnpath)\in  [0,\infty)$) if $n$ is odd (resp. even) and (ii)
there exists a state $s'$ such that $\tran{s_n}{\sigma(\fnpath)}{s'}$.

\smallskip\noindent{\em Paths under Schedulers.}
A finite path $s_0a_0\dots s_n$ \emph{follows} a scheduler $\sigma$ if for all $0\le m< n$, $a_m=\sigma\left(s_0a_0\dots s_m\right)$.
An infinite path $s_0a_0s_1a_1\dots$ \emph{follows} $\sigma$ if for all $n\in\Nset_0$, $a_n=\sigma\left(s_0a_0\dots s_n\right)$.
The set of finite (resp. infinite) paths following a scheduler $\sigma$ is denoted by $\fnpaths{\pta,\sigma}$ (resp. $\infpaths{\pta,\sigma}$).
We note that the set $\fnpaths{\pta,\sigma}$ is countably infinite from definition.

\smallskip \noindent{\em Probability Spaces under Schedulers.}
Let $\sigma$ be any scheduler.
The probability space w.r.t $\sigma$ is defined as
$
(\Omega^{\pta,\sigma}, \mathcal{F}^{\pta,\sigma}, \probm^{\pta,\sigma})
$
where (i) $\Omega^{\pta,\sigma}:=\infpaths{\pta,\sigma}$, (ii) $\mathcal{F}^{\pta,\sigma}$ is the smallest sigma-algebra generated by all cylinder sets induced by finite paths for which
a finite path $\fnpath$ induces the cylinder set $\cyl(\fnpath)$ of all infinite paths in $\infpaths{\pta,\sigma}$ with $\fnpath$ being their (common) prefix,
and (iii) $\probm^{\pta,\sigma}$ is the unique probability measure such that for all finite paths $\fnpath=s_0a_0\dots a_{n-1}s_n$ in $\fnpaths{\pta,\sigma}$,
\vspace{-0.8em}
\[
\textstyle{\probm^{\pta,\sigma}(\cyl(\fnpath))=\prod_{k=0}^{n-1} \probk(s_k, \sigma(s_0a_0\dots a_{k-1}s_k), s_{k+1}).}
\]
\vspace{-0.8em}
For details see~\cite{DBLP:journals/tcs/KwiatkowskaNSS02}.

\smallskip \noindent{\em Zenoness and Time-Divergent Schedulers.}
An infinite path $\infpath=s_0a_0s_1a_1\dots$ is \emph{zeno} if $\sum_{n=0}^\infty d_n<\infty$, where $d_n:=a_n$ if $a_n\in [0,\infty)$ and $d_n:=0$ otherwise.
Then a scheduler $\sigma$ is \emph{time divergent} if $\probm^{\pta,\sigma}(\{\pi\mid\pi\mbox{ is zeno}\})=0$.
In the following, we only consider time-divergent schedulers.
The purpose is to eliminate non-realistic zeno behaviours (i.e., performing infinitely many actions within a finite amount of time).
\vspace{-3em}
\begin{figure}[]
    \centering
    \begin{minipage}[t]{0.5\linewidth}  
        \centering  
        \resizebox{1\textwidth}{!}{
            \def \bY {1}
\def \aY {4}

\def \tX {2}
\def \workX {4}
\def \decX  {6}
\def \doneX {8}

\begin{tikzpicture}[x = 2 cm,y = 0.8 cm]



\location{a1}{(\workX,\aY)}
    { \WORK{\alpha} }
    { x\le 10 }
    { \{ \alpha \} };

\node[dec] (adec) at (\decX,\aY) {$\bullet$};



\location{b1}{(\workX,\bY)}
    { \WORK{\beta} }
    { x\le 15 }
    { \{ \beta \} };

\node[dec] (bdec) at (\tX,\bY) {$\bullet$};





\draw[nchoice] (a1) to node[auto,below] {
    $\tau_\alpha,\true$
} (adec);

\draw[pchoice,bend right] (adec) to node[auto,above] {
    $ \{ x \}, 0.1$
} (a1);

\draw[pchoice] (adec) -- (\decX,\bY) to node[auto,above] {
    $ \{ x \},0.9$
} (b1);



\draw[nchoice] (b1) to node[auto,above] {
    $\tau_\beta,\true$
} (bdec);

\draw[pchoice,bend right] (bdec) to node[auto,below] {
    $ \{ x \}, 0.2$
} (b1);

\draw[pchoice] (bdec) -- (\tX,\aY) to node[auto] {
    $ \{ x \},0.8$
}  (a1);


\end{tikzpicture}
        }
        \caption{A Simple Task-Processing Example}
        \label{fig:pta}
    \end{minipage}
    \begin{minipage}[t]{0.45\linewidth}  
        \centering
        \resizebox{1\textwidth}{!}{
            \def \hY {8}
\def \aY {6}
\def \maY{5}
\def \mY {4}
\def \mbY{3}
\def \bY {2}

\def \lX {3}
\def \qX {4}
\def \rX {5}
\def \fX {8}

\begin{tikzpicture} [x = 0.8cm]

\node[mode,initial]  (i) at (0,  \mY  ) {$ \INIT $};
\node[mode,accepting](qa) at (\qX,\aY) {$ \q{\alpha} $};
\node[mode,accepting](qb) at (\qX,\bY) {$ \q{\beta } $};

\node[mode]          (f) at (\fX,\mY) {$ \FAIL $};

\draw[tran] (i) --(0,\aY) to node[auto,sloped,above] {
    $\{ \alpha \},\true, \{ y \}$
} (qa);
\draw[tran] (i) --(0,\bY) to node[auto,sloped,below] {
    $\{ \beta  \},\true, \{ y \}$
} (qb);

\draw[tran,bend left] (qa) to node[auto,sloped,below] {
    $\{ \beta \}, W_\beta < y, \emptyset$
} (f);
\draw[tran,bend right] (qb) to node[auto,sloped,above] {
    $\{ \alpha  \}, W_\alpha < y, \emptyset$
} (f);

\draw[tran] (qa) to node[auto,sloped,above] {
    $\{ \alpha \}, C_\alpha < y, \emptyset$
} (\fX,\aY) -- (f);
\draw[tran] (qb) to node[auto,sloped,below] {
    $\{ \beta  \}, C_\beta < y, \emptyset$
} (\fX,\bY) -- (f);


\draw[tran] (qa) -- (\rX,\hY) to node[auto,above] {
    $\{ \alpha \}, y \le C_\alpha, \emptyset$
} (\lX,\hY) -- (qa);

\draw[tran] (qb) -- (\rX,0) to node[auto,below] {
    $\{ \beta \}, y \le C_\beta, \emptyset$
} (\lX,0) -- (qb);

\draw[tran] (qa) -- (\rX,\maY) to node[auto,sloped,above] {
    $\{ \beta \}, y \le W_\beta, \{ y \}$
} (\rX,\mbY) -- (qb);

\draw[tran] (qb) -- (\lX,\mbY) to node[auto,sloped,above] {
    $\{ \alpha \}, y \le W_\alpha, \{ y \}$
} (\lX,\maY) -- (qa);



\end{tikzpicture}
            }
        \caption{A DTRA Specification}
        \label{fig:dta}
    \end{minipage}
\end{figure}  


In the following example, we illustrate a PTA which models a simple task-processing example.

\begin{example}\label{ex:pta}
Consider the PTA depicted in Figure~\ref{fig:pta}.
$\WORK{\alpha},\WORK{\beta}$ are locations
and $x$ is the only clock. Below each location first comes (vertically) its invariant condition and then the set of labels assigned to the location. For example, $\inv(\WORK{\alpha})=x \le 10$
and $\lbfunc(\WORK{\alpha})=\{ \alpha \}$.
The two dots together with their corresponding solid line and dashed arrows refer to two actions $\tau_\alpha,\tau_\beta$ with their enabling conditions and transition probabilities given by the probabilistic transition function.
For example, the upper dot at the right of $\WORK{\alpha}$ refers to the action $\tau_\alpha$ for which $\enab(\WORK{\alpha}, \tau_\alpha)=\true$, $\prob(\WORK{\alpha}, \tau_\alpha)(\{x\},\WORK{\alpha})=0.1$, and $\prob(\WORK{\alpha}, \tau_\alpha)(\{x\},\WORK{\beta})=0.9$.
The PTA models a faulty machine which processes two different kinds of jobs (i.e., $\alpha,\beta$) in an alternating fashion.
If the machine fails to complete the current job, then it will repeat processing the job until it completes the job.
For job $\alpha$, the machine always processes the job within $10$ time units (cf. the invariant condition $x\le 10$), but may fail to complete the job with probability $0.1$;
Analogously, the machine always processes the job $\beta$ within $15$ time units (cf. the invariant condition $x\le 15$), but may fail to complete the job with probability $0.2$.
Note that we omit the initial location in this example.
\end{example}
\vspace{-1.5em}
\subsection{Timed Automata}
\vspace{-0.6em}
\begin{definition}[Timed Automata ~\cite{DBLP:journals/tse/DonatelliHS09,DBLP:journals/corr/abs-1101-3694,DBLP:conf/hybrid/Fu13}]
A \emph{timed automaton} (TA)
$\nta$ is a tuple
\begin{equation}\label{eq:nta}
\nta=(\cstates,\alphabet,\dtclocks,\rules)
\end{equation}
where
\begin{compactitem}
\item $\cstates$ is a finite set of \emph{modes}; 
\item $\alphabet$ is a finite \emph{alphabet} of \emph{symbols} disjoint from $[0,\infty)$;
\item $\dtclocks$ is a finite set of \emph{clocks};
\item $\rules\subseteq \cstates\times\alphabet\times\clcons{\dtclocks}\times 2^{\dtclocks}\times \cstates$ is a finite set of \emph{rules}.
\end{compactitem}
$\dta$ is a \emph{deterministic} TA (DTA) if the following holds:
\begin{compactenum}
\item ({\em determinism}) for $(\dtloc_i,b_i,\phi_i,X_i,\dtloc'_i)\in\rules$ ($i\in\{1,2\}$), if $(\dtloc_1,b_1)=(\dtloc_2,b_2)$ and $\sat{\phi_1}\cap \sat{\phi_2}\ne\emptyset$ then $(\phi_1,X_1,\dtloc'_1)=(\phi_2,X_2,\dtloc'_2)$;
\item ({\em totality}) for all $(q,b)\in \cstates\times\Sigma$ and $\nu\in\val{\clocks}$, there exists $(q,b,\phi,X,q')\in\rules$ such that $\nu\models \phi$.
\end{compactenum}
\end{definition}
Informally, A TA is deterministic if there is always exactly one rule applicable for the timed transition. 
We do not incorporate invariants in TAs as we use TAs as language acceptors. 

Below we illustrate the semantics of TAs. We fix a TA $\dta$ in the form (\ref{eq:nta}).


\smallskip\noindent{\em Configurations and One-Step Transition Relation.}
A \emph{configuration} is a pair $(\dtloc,\nu)$, where $\dtloc\in \cstates$ and $\nu\in\val{\dtclocks}$.
The \emph{one-step transition relation}
\[
\Rightarrow~\subseteq~ (\cstates \times \val{\dtclocks}) \times (\alphabet\cup [0,\infty)) \times (\cstates \times \val{\dtclocks})
\]
is defined by: $((\dtloc,\nu),a,(\dtloc',\nu'))\in\Rightarrow$ iff
either (i) $a\in [0,\infty)$ and $(\dtloc',\nu')=(\dtloc,\nu+a)$
or (ii) $a\in\alphabet$ and there exists a rule $(q,a,\phi,X,q')\in\Delta$
such that $\nu\models\phi$ and $\nu'=\reset{\nu}{X}$.
For the sake of convenience, we write $\dtatr{(\dtloc,\nu)}{a}{(\dtloc',\nu')}$
instead of $((\dtloc,\nu),a,(\dtloc',\nu'))\in\Rightarrow$.
Note that if $\dta$ is deterministic, then there is a unique $(\dtloc',\nu')$ such that
$\dtatr{(\dtloc,\nu)}{a}{(\dtloc',\nu')}$ given any $(\dtloc,\nu), a$.

\smallskip
\noindent{\em Infinite Timed Words and Runs.}
An \emph{infinite timed word} is an infinite sequence $w=\{a_n\}_{n\in\Nset_0}$ such that
$
    a_{2n}      \in [0,\infty)
    \mbox{ and }
    a_{2n+1}    \in \alphabet
$
for all $n$;
the infinite timed word $w$ is \emph{time-divergent} if $\sum_{n\in\Nset_0}a_{2n}=\infty$.
A \emph{run} of $\dta$ on an infinite timed word $w=\{a_n\}_{n\in\Nset_0}$
with \emph{initial configuration} $(\dtloc,\nu)$,
is an infinite
sequence $\xi=\{\left(\dtloc_n,\nu_n,a_n\right)\}_{n\in\Nset_0}$
satisfying that $(\dtloc_0,\nu_0)=(\dtloc,\nu)$ and $\dtatr{(\dtloc_n,\nu_n)}{a_n}{(\dtloc_{n+1},\nu_{n+1})}$
for all $n\in\Nset_0$;
the \emph{trajectory}
$\traj{ \xi }$
of the run
$\xi$
is defined as an infinite word over $\cstates$ such that
$\traj{ \xi } := q_0 q_1 \dots$~.
Note that if $\dta$ is deterministic, then there is a unique run on every infinite timed word.

Below we illustrate the acceptance condition for TAs.
We consider Rabin acceptance condition as the infinite acceptance condition.


\vspace{-0.5em}
\begin{definition}[Rabin Acceptance Condition \cite{DBLP:books/daglib/0020348}]
A TA with \emph{Rabin acceptance condition} (TRA) is a tuple
\begin{equation}\label{eq:tra}
    \dta=(\cstates,\alphabet,\dtclocks,\rules,\rabin)
\end{equation}
where $(\cstates,\alphabet,\dtclocks,\rules)$ is a TA and $\rabin$ is a finite
set of pairs
$
    \rabin
        = \{
            (H_1,K_1 ),
            \dots,
            (H_n,K_n)
        \}
$ representing a Rabin condition for which
$H_i$ and $K_i$ are subsets of $\cstates$ for all $i\le n$.
$\dta$ is a \emph{deterministic} TRA (DTRA) if $(\cstates,\alphabet,\dtclocks,\rules)$ is a DTA.
A set $\cstates' \subseteq \cstates $ is \emph{Rabin-accepting} by $\rabin$,
written as the predicate $\accept{\cstates'}{\rabin}$,
if there is $ 1 \leq i \leq n$ such that $ \cstates' \cap H_i= \emptyset $
and $ \cstates' \cap K_i \neq \emptyset $. An infinite timed word $w$ is \emph{Rabin-accepted} by
$\dta$ with \emph{initial configuration} $(\dtloc,\nu)$ iff there exists a run $\xi$ of $(\cstates,\alphabet,\dtclocks,\rules)$ on $w$ with $(\dtloc,\nu)$ such that
$
    \infset{
         \traj{\xi}
    }
$ is Rabin-accepting by $\rabin$.
\end{definition}


\begin{example}\label{ex:dta}
Consider the DTRA depicted in Figure~\ref{fig:dta}.
The alphabet of this DTRA is the powerset of atomic propositions in Figure~\ref{fig:pta}.
In the figure, $\INIT,\q{\alpha},\q{\beta}$ and $\FAIL$ are modes with the Rabin condition
$
    \rabin
    =
    \{
        (\{ \FAIL \},
        \{
            \q{\alpha},
            \q{\beta}
        \})
    \}
$, $y$ is a clock and arrows between modes are rules.
$C_\gamma,W_\gamma$ ($\gamma\in\{\alpha,\beta\}$) are undetermined integer constants.
For example, there are four rules emitting from $\q{\alpha}$:
\begin{align*}
(\q{\alpha}, \{ \alpha \}, y \le C_\alpha,\emptyset, \q{\alpha}),& ~(\q{\alpha}, \{ \beta \}, y \le W_\beta, \{ y \}, \q{\beta}),\\
(\q{\alpha}, \{ \alpha \}, C_\alpha<y,\emptyset, \FAIL),& ~(\q{\alpha}, \{ \beta \}, W_\beta<y, \emptyset, \FAIL).
\end{align*}
$\INIT$ is the initial mode to read the first symbol upon which transiting to either $\q{\alpha}$ or $\q{\beta}$.
$\FAIL$ is a deadlock mode from which all rules go to itself.
Note that the rules of the DTRA does not satisfy the totality condition.
However, we assume that all missing rules lead to the mode $\FAIL$ and does not affect the Rabin acceptance condition.
The mode $\q{\alpha}$ does not reset the clock $y$ until it reads $\beta$.
Moreover, $\q{\alpha}$ does not transit to $\FAIL$ only if the time spent within a maximal consecutive segment of $\alpha$'s (in an infinite timed word) is no greater than $C_\alpha$ time units (cf. the rule $(\q{\alpha}, \{ \alpha \}, y \le C_\alpha,\emptyset, \q{\alpha})$) and the total time from the start of the segment until $\beta$ is read (the time within a maximal consecutive segment of $\alpha$'s plus the time spent on the last $\alpha$ in the segment) is no greater than $W_\beta$
(cf. the rule $(\q{\alpha}, \{ \beta \}, y \le W_\beta, \{ y \}, \q{\beta})$).
The behaviour of the mode $\q{\beta}$ can be argued similar to that of $\q{\alpha}$ 
where the only difference is to flip $\alpha$ and $\beta$.
From the Rabin acceptance condition, the DTRA specifies a property on infinite timed words that the time spent within a maximal consecutive segment of $\alpha$'s (resp. $\beta$'s) and the total time until $\beta$ (resp. $\alpha$) is read
always satisfy the conditions specified by $\q{\alpha}$ (resp. $\q{\beta}$).
\end{example}

\vspace{-1.8em}
\section{Problem Statement}
\vspace{-1em}
In this part, we define the {\sc PTA-TRA} problem of model-checking {PTAs} against TA-specifications.
The problem takes a PTA and a TRA as input, and computes the minimum and the maximum probability that infinite paths under the PTA are accepted by the TRA.
Informally, the TRA encodes the linear dense-time property by judging whether an infinite path is accepted or not through its external behaviour,
then the problem is to compute the probability that an infinite path is accepted by the TRA.
In practice, the TRA can be used to capture all good (or bad) behaviours, so the problem can be treated as a task to evaluate to what extent the PTA behaves in a good (or bad) way.

Below we fix a well-formed PTA $\pta$ taking the form (\ref{eq:pta}) and a TRA $\dta$ taking the form (\ref{eq:tra}).
W.l.o.g., we assume that $\clocks\cap\dtclocks=\emptyset$ and $\alphabet=2^{\ap}$.
We first show how an infinite path in $\infpaths{\pta}$ can be interpreted as an infinite timed word.
%
\begin{definition}[Infinite Paths as Infinite Timed Words]\label{def:interpretation}
Given an infinite path
$
    \infpath
        =
            (\loc_0,\nu_0)
            a_0
            (\loc_1,\nu_1)
            a_1
            (\loc_2,\nu_2)a_2
            \dots
$
under $\pta$, the infinite timed word $\lbfunc(\infpath)$
is defined as
$
\lbfunc(\infpath):=a_0\lbfunc(\loc_2)a_2\lbfunc(\loc_4)\dots a_{2n}\lbfunc(\loc_{2n+2}) \dots\enskip.
$
Recall that $\nu_0=\zero$, $a_{2n}\in [0,\infty)$ and $a_{2n+1}\in\acts$ for $n\in\Nset_0$.
\end{definition}
\begin{remark}
Informally, the interpretation in Definition~\ref{def:interpretation} works
by (i) dropping (a) the initial location $\loc_0$, (b) all clock valuations $\nu_n$'s,
(c) all locations $\loc_{2n+1}$'s following a time-elapse,
(d) all internal actions $a_{2n+1}$'s of $\pta$ and (ii) replacing every $\loc_{2n}$ ($n\ge 1$) by $\lbfunc(\loc_{2n})$.
The interpretation captures only external behaviours including time-elapses and labels of locations upon state-change, and discards internal behaviours such as the concrete locations, clock valuations and actions.
Although the interpretation ignores the initial location,
we deal with it in our acceptance condition where the initial location is preprocessed by the TRA.
\end{remark}
%
\vspace{-1em}
\begin{definition}[Path Acceptance]\label{def:fnacc}
An infinite path $\infpath$ of $\pta$ is \emph{accepted} by $\dta$ w.r.t
initial configuration $(\dtloc,\nu)$, written as the single predicate $\acccept{\tra}{(\dtloc,\nu)}{\infpath}$,
if there is a configuration $(\dtloc',\nu')$ such that $\dtatr{(\dtloc,\nu)}{\lbfunc(\loc^*)}{(\dtloc',\nu')}$ and
the infinite word $\lbfunc(\infpath)$ is Rabin-accepted by $\dta$ with
$
\left(q',\nu'\right)
$.
\end{definition}

The initial location omitted in Definition~\ref{def:interpretation} is preprocessed by specifying explicitly that the first label $\lbfunc(\loc^*)$ is read by the initial configuration $(\dtloc,\nu)$.
Below we define acceptance probabilities over infinite paths under $\pta$.
\vspace{-0.3em}
\begin{definition}[Acceptance Probabilities]\label{def:accprob}
The probability that $\pta$ \emph{observes} $\tra$ under scheduler $\sigma$ and initial mode $\dtloc\in\cstates$, denoted by $\pr{\dtloc}{\sigma}$, is defined by:
\[
    \pr{\dtloc}{\sigma}
        :=
            \probm^{\pta,\sigma}(
                \LangCsAqF
            )
\]
where $\LangCsAqF$ is the set of infinite paths under $\pta$ that are accepted by the TRA $\tra$ w.r.t $(\dtloc,\zero)$ i.e.
$
    \LangCsAqF = \left \{
        \infpath \in \infpaths{\pta,\sigma} \mid
        \acccept
            {\tra}
            {(\dtloc,\zero)}
            {\infpath}
    \right\}.
$
\end{definition}
Since the set $\fnpaths{\pta,\sigma}$ is countably-infinite,
$\LangCsAqF$ is measurable since it can be represented as a countable intersection of certain countable unions of some cylinder sets (cf.~\cite[Remark 10.24]{DBLP:books/daglib/0020348} and Appendix \ref{app:proof} for details).


Now we introduce the {\sc PTA-TRA} problem.

\begin{compactitem}
\item {\bf Input:} a well-formed PTA $\pta$, a TRA  $\dta$ and an initial mode $\dtloc$ in $\dta$;
\item {\bf Output:} $\inf_\sigma \pr{\dtloc}{\sigma}$ and $\sup_\sigma  \pr{\dtloc}{\sigma}$, where $\sigma$ ranges over all time-divergent schedulers for $\pta$.
\end{compactitem}

We refer to the problem as {\sc PTA-DTRA} if $\dta$ is deterministic.




\vspace{-1.6em}
\section{The PTA-DTRA Problem}
\vspace{-1em}
In this section, we solve the {\sc PTA-DTRA} problem through a product construction.
Based on the product construction, we also settle the complexity of the problem.
Below we fix a well-formed PTA $\pta$ in the form (\ref{eq:pta}) and a DTRA $\dta$ in the form (\ref{eq:tra}).
W.l.o.g, we consider that $\clocks\cap\dtclocks=\emptyset$ and $\alphabet=2^{\ap}$.

\smallskip
\noindent {\bf The Main Idea.} The core part of the product construction is a PTA which preserves the probability of the set of infinite paths accepted by $\dta$.
The intuition is to let $\dta$ reads external actions of $\pta$ while $\pta$ evolves along the time axis.
The major difficulty is that when $\pta$ performs actions in $\acts$, there is a probabilistic choice between the target locations. Then $\dta$ needs to know the labelling of the target location and the rule (in $\rules$) used for the transition.
A naive solution is to integrate each single rule in $\rules$ into the enabling condition $\enab$ in $\pta$. However, this simple solution does not work since a single rule fixes the labelling of a location in $\pta$, while the probability distribution (given by $\prob$) can jump to locations with different labels.
We solve this difficulty by integrating into the enabling condition
enough information on clock valuations under $\dta$ so that the rule used for the transition is clear.

\smallskip
\noindent{\textbf{The Product Construction.}}
For each $\dtloc\in\cstates$, we let
\begin{align*}
\exttuples_{\dtloc}:=\{h:\alphabet\rightarrow\clcons{\dtclocks}\mid
\forall b\in\alphabet.\left(\dtloc, b, h(b), X, \dtloc')\in\rules\mbox{ for some }X, \dtloc'\right)\}\enskip.
\end{align*}
The totality of $\rules$ ensures that $\exttuples_{\dtloc}$ is non-empty.
Intuitively, every element of $\exttuples_{\dtloc}$ is a tuple of clock constraints $\{\phi_b\}_{b\in\alphabet}$, where each
clock constraint $\phi_b$ is chosen from the rules emitting from $\dtloc$ and $b$.
The \emph{product PTA} $\product{\pta}{\dta_q}$ (between $\pta$ and $\dta$ with initial mode $\dtloc$) is defined as
$
\left({\locs}_\otimes, \loc^*_\otimes, \clocks_\otimes, \acts_\otimes, \inv_\otimes, \enab_\otimes,  \prob_\otimes, \cstates, \lbfunc_\otimes\right)
$
where :

\begin{compactitem}
\item $\locs_\otimes:=\locs\times \cstates$;
\item $\loc^*_\otimes:=(\loc^*, q^\star)$ where $q^\star$ is the unique mode such that $\dtatr{(\dtloc,\zero)}{\lbfunc(\loc^*)}{(q^\star,\zero)}$;
\item $\clocks_\otimes:=\clocks\cup\dtclocks$;
\item $\acts_\otimes:=\acts\times\bigcup_\dtloc\exttuples_\dtloc$; 
\item $\inv_\otimes(\loc,\dtloc):=\inv(\loc)$ for all $(\loc,\dtloc)\in \locs_\otimes$;
\item $\enab_\otimes\left((\loc,\dtloc), (a,h)\right):=\enab(\loc,a)\wedge \bigwedge_{b\in\alphabet} h(b)$ if $h\in \exttuples_\dtloc$,and $\enab_\otimes\left((\loc,\dtloc), (a,h)\right):=\false$ otherwise, for all $(\loc,\dtloc)\in \locs_\otimes$, $(a,h) \in  \acts_\otimes$.
\item
$
    \lbfunc_\otimes \left(
        \loc,\dtloc
    \right)
        := \left \{
            \dtloc
        \right \}
    \mbox{ for all } \left (
        \loc,\dtloc
    \right )
    \in \locs_\otimes;
$
\item $\prob_\otimes$ is given by
\begin{align*}
&\prob_\otimes\left((\loc,\dtloc),(a,h)\right)(Y,(\loc',\dtloc')):=\\
&~~\begin{cases}
\prob\left(\loc,a\right)(Y\cap \clocks,\loc') & \mbox{if } (\dtloc,\lbfunc\left(\loc'\right), h(\lbfunc\left(\loc'\right)), Y\cap \dtclocks, \dtloc')\\
& \mbox{\quad is a (unique) rule in }\rules\\
0 & \mbox{otherwise}
\end{cases}\enskip.
\end{align*}
for all $(\loc,\dtloc),(\loc',\dtloc')\in \locs_\otimes$, $(a,h) \in  \acts_\otimes$ , $Y \in \clocks_\otimes$
\end{compactitem}
Besides standard constructions (e.g., the Cartesian product between $\locs$ and $\cstates$), the product construction also has Cartesian product between $\acts$ and $\bigcup_\dtloc\exttuples_\dtloc$. For each extended action $(a,h)$, the enabling condition for this action is the conjunction between $\enab(\loc,a)$ and all clock constraints from $h$.
This is to ensure that when the action $(a,h)$ is taken, the clock valuation under $\dta$ satisfies every clock constraint in $h$.
Then in the definition for $\prob_\otimes$, upon the action $(a,h)$, the product PTA first perform probabilistic jump from $\pta$ with the target location $\loc'$, then chooses the unique rule $(\dtloc,\lbfunc\left(\loc'\right), h(\lbfunc\left(\loc'\right)), Y\cap \dtclocks, \dtloc')$ from the emitting mode $\dtloc$ and the label $\lbfunc\left(\loc'\right)$ for which the uniqueness comes from the determinism of $\rules$, then perform the discrete transition from $\dta$.
Finally, we label each $(\loc,\dtloc)$ by $\dtloc$ to meet the Rabin acceptance condition.\qed

It is easy to see that the PTA $\product{\pta}{\dta_q}$ is well-formed as $\pta$ is well-formed and $\dta$ does not introduce extra invariant conditions.
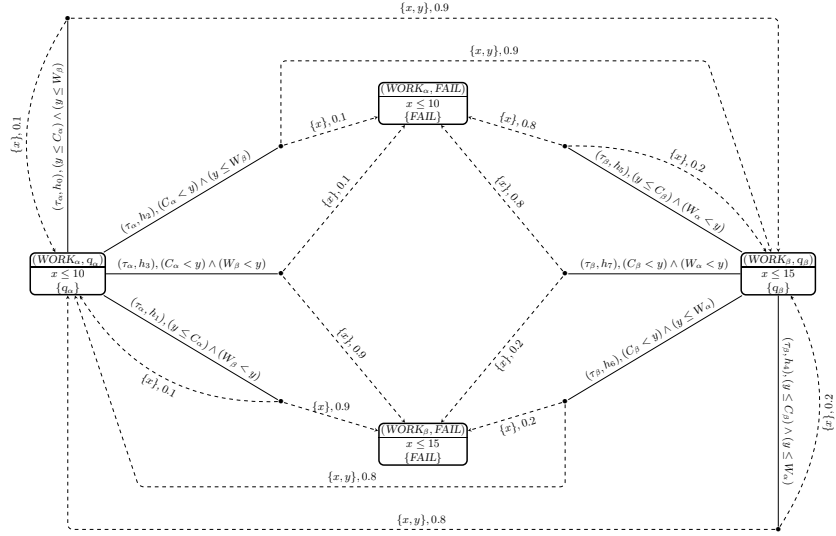
\begin{figure*}
    \centering
    \resizebox{1.0\textwidth}{!}{
        \def \hY { 6}
\def \hhY{ 5}
\def \aY { 4}
\def \aaY{ 3}
\def \abY{ 1}
\def \mY { 0}
\def \acY{-1}
\def \adY{-3}
\def \bY {-4}
\def \llY{-5}
\def \lY {-6}

\def \aX    {-10}
\def \aaX   { -8}
\def \adecX { -4}
\def \fX    {  0}
\def \bdecX {  4}
\def \bbX   {  8}
\def \bX    { 10}

\def \rX {16}

\begin{tikzpicture} [y =  1.2 cm]

\location{\PairV{a}{qa}}{(\aX,\mY)}
    {
        \PairS
            {\WORK{\alpha}}
            {\q{\alpha}}
    }
    { x\le 10 }
    { \{ \q{\alpha} \} };

\location{\PairV{b}{qb}}{(\bX,\mY)}
    {
        \PairS
            {\WORK{\beta}}
            {\q{\beta}}
    }
    { x\le 15 }
    { \{ \q{\beta} \} };

\location{\PairV{a}{f}}{(\fX,\aY)}
    {
        \PairS
            {\WORK{\alpha}}
            {\FAIL}
    }
    { x\le 10 }
    { \{ \FAIL \} };

\location{\PairV{b}{f}}{(\fX,\bY)}
    {
        \PairS
            {\WORK{\beta}}
            {\FAIL}
    }
    { x\le 15 }
    { \{ \FAIL \} };


\node[dec] (aadec) at (\aX,\hY) {$\bullet$};
\draw[nchoice] (\PairV{a}{qa}) to node[auto,above,sloped] {
    $(\tau_\alpha,h_0),( y \le C_\alpha ) \land ( y \le W_\beta )$
} (aadec);
\draw[pchoice,bend right] (aadec) to node[auto,above,sloped] {
    $ \{ x \}, 0.1$
} (\PairV{a}{qa});
\draw[pchoice] (aadec) to node[auto,above] {
    $ \{ x, y \},0.9$
} (\bX,\hY) -- (\PairV{b}{qb});

\node[dec] (abdec) at (\adecX,\adY) {$\bullet$};
\draw[nchoice] (\PairV{a}{qa}) to node[auto,above,sloped] {
    $(\tau_\alpha,h_1),( y \le C_\alpha ) \land ( W_\beta < y )$
} (abdec);
\draw[pchoice,bend left] (abdec) to node[auto,below,sloped] {
    $ \{ x \}, 0.1$
} (\PairV{a}{qa});
\draw[pchoice] (abdec) to node[auto,above] {
    $ \{ x \}, 0.9$
} (\PairV{b}{f});

\node[dec] (acdec) at (\adecX,\aaY) {$\bullet$};
\draw[nchoice] (\PairV{a}{qa}) to node[auto,above,sloped] {
    $(\tau_\alpha,h_2),( C_\alpha < y ) \land ( y \le W_\beta )$
} (acdec);
\draw[pchoice] (acdec) to node[auto,above,sloped] {
    $ \{ x \}, 0.1$
} (\PairV{a}{f});
\draw[pchoice] (acdec) -- (\adecX,\hhY) to node[auto,above] {
    $ \{ x, y \}, 0.9$
} (\bbX,\hhY) -- (\PairV{b}{qb});

\node[dec] (addec) at (\adecX,\mY) {$\bullet$};
\draw[nchoice] (\PairV{a}{qa}) to node[auto,above,sloped] {
    $(\tau_\alpha,h_3),( C_\alpha < y ) \land ( W_\beta < y )$
} (addec);
\draw[pchoice] (addec) to node[auto,above,sloped] {
    $ \{ x \}, 0.1$
} (\PairV{a}{f});
\draw[pchoice] (addec) to node[auto,above,sloped] {
    $ \{ x \}, 0.9$
} (\PairV{b}{f});


\node[dec] (badec) at (\bX,\lY) {$\bullet$};
\draw[nchoice] (\PairV{b}{qb}) to node[auto,above,sloped] {
    $(\tau_\beta,h_4),( y \le C_\beta ) \land ( y \le W_\alpha )  $
} (badec);
\draw[pchoice,bend right] (badec) to node[auto,below,sloped] {
    $ \{ x \}, 0.2$
} (\PairV{b}{qb});
\draw[pchoice] (badec) to node[auto,above,sloped] {
    $ \{ x, y \},0.8$
} (\aX,\lY) -- (\PairV{a}{qa});

\node[dec] (bbdec) at (\bdecX,\aaY) {$\bullet$};
\draw[nchoice] (\PairV{b}{qb}) to node[auto,above,sloped] {
    $(\tau_\beta,h_5),( y \le C_\beta )  \land ( W_\alpha < y )$
} (bbdec);
\draw[pchoice,bend left] (bbdec) to node[auto,above,sloped] {
    $ \{ x \}, 0.2$
} (\PairV{b}{qb});
\draw[pchoice] (bbdec) to node[auto,above,sloped] {
    $ \{ x \}, 0.8$
} (\PairV{a}{f});

\node[dec] (bcdec) at (\bdecX,\adY) {$\bullet$};
\draw[nchoice] (\PairV{b}{qb}) to node[auto,above,sloped] {
    $(\tau_\beta,h_6),( C_\beta < y ) \land ( y \le W_\alpha )$
} (bcdec);
\draw[pchoice] (bcdec) to node[auto,below,sloped] {
    $ \{ x \}, 0.2$
} (\PairV{b}{f});
\draw[pchoice] (bcdec) -- (\bdecX,\llY) to node[auto,above,sloped] {
    $ \{ x, y \}, 0.8$
} (\aaX,\llY) --(\PairV{a}{qa});

\node[dec] (bddec) at (\bdecX,\mY) {$\bullet$};
\draw[nchoice] (\PairV{b}{qb}) to node[auto,above,sloped] {
    $(\tau_\beta,h_7),( C_\beta < y ) \land ( W_\alpha < y )$
} (bddec);
\draw[pchoice] (bddec) to node[auto,below,sloped] {
    $ \{ x \}, 0.2$
} (\PairV{b}{f});
\draw[pchoice] (bddec) to node[auto,above,sloped] {
    $ \{ x \}, 0.8$
} (\PairV{a}{f});

\end{tikzpicture}
        }
    \caption{The Product PTA for Our Running Example}
    \label{fig:product}
\end{figure*}
\begin{example}
The product PTA between the PTA in Example~\ref{ex:pta} and the DTRA in Example~\ref{ex:dta} is depicted in Figure~\ref{fig:product}.
In the figure, $\PairS{\WORK{\alpha}}{\q{\alpha}}, \PairS{\WORK{\beta}}{\q{\beta}}$ and
$\PairS{\WORK{\alpha}}{\FAIL}, \PairS{\WORK{\beta}}{\FAIL}$
are product locations. We omit the initial location and unreachable locations in the product construction.
From the construction of $\exttuples_{\dtloc}$'s, the functions $h_i$'s are as follows (we omit redundant labels such as $\emptyset$ and $\{\alpha,\beta\}$ which never appear in the PTA):
\begin{compactitem}
\item $h_0=\{\{ \alpha \} \mapsto y \le C_\alpha ,\{ \beta  \} \mapsto y \le W_\beta\}$;
\item $h_1=\{\{ \alpha \} \mapsto y \le C_\alpha ,\{ \beta  \} \mapsto W_\beta< y\}$;
\item $h_2=\{\{ \alpha \} \mapsto C_\alpha < y ,\{ \beta  \} \mapsto y \le W_\beta\}$;
\item $h_3=\{\{ \alpha \} \mapsto C_\alpha < y ,\{ \beta  \} \mapsto W_\beta< y\}$;
\item $h_4=\{\{ \beta \} \mapsto y \le C_\beta ,\{ \alpha  \} \mapsto y\le W_\alpha\}$;
\item $h_5=\{\{ \beta \} \mapsto y \le C_\beta ,\{ \alpha  \} \mapsto W_\alpha< y\}$;
\item $h_6=\{\{ \beta \} \mapsto C_\beta < y ,\{ \alpha  \} \mapsto y \le W_\alpha\}$;
\item $h_7=\{\{ \beta \} \mapsto C_\beta < y ,\{ \alpha  \} \mapsto W_\alpha< y\}$.
\end{compactitem}
The intuition is that the DTA accepts all infinite paths under the PTA such that the failing time for job $\gamma$ ($\gamma\in\{\alpha,\beta\}$) (the time within the consecutive $\gamma$'s) should be no greater than $C_\gamma$ and the waiting time for job $\gamma$ (the failing time plus the time spent on the last $\gamma$) should be no greater than $W_\gamma$.
\end{example}


Below we clarify the correspondence between $\pta,\dta$ and $\product{\pta}{\dta_\dtloc}$.
We first show the relationship between paths under $\pta$ and those under $\product{\pta}{\dta_\dtloc}$.
Informally, paths under $\product{\pta}{\dta_\dtloc}$ are just paths under $\pta$ extended with runs of $\dta$.

\smallskip\noindent
{\textbf{Transformation $\pfunc$ for Paths from $\pta$ into $\product{\pta}{\dta_\dtloc}$.}}
The transformation is defined as the function $\pfunc:\fnpaths{\pta}\cup\infpaths{\pta}\rightarrow \fnpaths{\product{\pta}{\dta_\dtloc}}\cup\infpaths{\product{\pta}{\dta_\dtloc}}$
which transform a finite or infinite path under $\pta$ into one under  $\product{\pta}{\dta_\dtloc}$.
For a finite path
$
\fnpath=(\loc_0,\nu_0)a_0\dots a_{n-1}(\loc_n,\nu_n)
$
under $\pta$ (note that $(\loc_0,\nu_0)=(\loc^*, \zero)$ by definition),
we define $\pfunc(\fnpath)$ to be the unique finite path
\begin{equation}\label{eq:trho}
\pfunc(\fnpath):=((\loc_0,\dtloc_0),\nu_0\cup\mu_0)a'_0\dots a'_{n-1}((\loc_n,\dtloc_n),\nu_n\cup\mu_n)
\end{equation}
under $\product{\pta}{\dta_\dtloc}$ such that the following conditions (\dag) hold:
\begin{compactitem}
\item $\dtatr{(\dtloc,\zero)}{\lbfunc(\loc^*)}{(q_0,\mu_0)}$ (note that $\mu_0=\zero$);
\item for all $0\le k< n$, if $a_k\in [0,\infty)$ then $a'_k=a_k$ and $\dtatr{(\dtloc_k,\mu_k)}{a_k}{\left(\dtloc_{k+1},\mu_{k+1}\right)}$;
\item for all $0\le k< n$, if $a_k\in\acts$ then $a'_k=(a_k,\xi_k)$ and $\dtatr{(\dtloc_k,\mu_k)}{\lbfunc(\loc_{k+1})}{\left(\dtloc_{k+1},\mu_{k+1}\right)}$ where $\xi_k$ is the unique function such that for each symbol $b\in\alphabet$, $\xi_k(b)$ is the unique clock constraint appearing in a rule emitting from $q_k$ and with symbol $b$ such that $\mu_k\models\xi_k(b)$.
\end{compactitem}
Likewise, for an infinite path $\infpath=(\loc_0,\nu_0)a_0(\loc_1,\nu_1)a_1\dots$
under $\pta$, we define $\pfunc(\infpath)$ to be the unique infinite path
\begin{equation}\label{eq:trinfpath}
\pfunc(\infpath):=((\loc_0,\dtloc_0),\nu_0\cup\mu_0)a'_0((\loc_1,\dtloc_1),\nu_1\cup\mu_1)a'_1\dots
\end{equation}
under $\product{\pta}{\dta_\dtloc}$ such that the three conditions in (\dag) hold for all $k\in\Nset_0$ instead of all $0\le k< n$. From the determinism and totality of $\dta$, it is straightforward to prove the following result.
\begin{lemma}\label{lemm:pfuncbij}
The function $\pfunc$ is a bijection. Moreover, for any infinite path $\infpath$ under $\pta$, $\infpath$ is non-zeno iff $\pfunc(\infpath)$ is non-zeno.
\end{lemma}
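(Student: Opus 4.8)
The plan is to construct the inverse of $\pfunc$ explicitly and argue the two maps compose to the identity on both sides, using the determinism and totality of $\dta$ at every step. First I would observe that $\pfunc$ is well-defined at all: given a (finite or infinite) path under $\pta$, the initial mode transition $\dtatr{(\dtloc,\zero)}{\lbfunc(\loc^*)}{(q_0,\mu_0)}$ exists and is unique because $\dta$ is deterministic and total, and then inductively each subsequent mode/valuation pair $(\dtloc_{k+1},\mu_{k+1})$ is forced: for a time step $a_k\in[0,\infty)$ the one-step relation $\Rightarrow$ is deterministic on delays, and for an action step the symbol $\lbfunc(\loc_{k+1})$ is read, so determinism of $\rules$ pins down the target, and totality guarantees the witnessing rule (hence the clock constraint $\xi_k(b)$ for each $b$) exists. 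This also shows $\xi_k$ is genuinely a well-defined element of $\exttuples_{q_k}$, so $a'_k=(a_k,\xi_k)\in\acts_\otimes$, and the enabling condition $\enab_\otimes$ is satisfied by $\nu_k\cup\mu_k$ precisely because $\mu_k\models\xi_k(b)$ for every $b$; thus $\pfunc(\fnpath)$ really is a path under $\product{\pta}{\dta_\dtloc}$.

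For surjectivity and injectivity together, I would define a candidate inverse $\pfunc^{-1}$ by projecting a path under $\product{\pta}{\dta_\dtloc}$ onto its first coordinate: send $((\loc_k,\dtloc_k),\nu_k\cup\mu_k)$ to $(\loc_k,\nu_k)$, send a delay label $a_k$ to itself, and send an action label $(a_k,h)$ to $a_k$. I would check that this projection of any $\product{\pta}{\dta_\dtloc}$-path is a legal $\pta$-path — the invariant $\inv_\otimes(\loc,\dtloc)=\inv(\loc)$ and the definition of $\prob_\otimes$ (which is $\prob(\loc,a)$ on the $\clocks$-components whenever the accompanying $\dta$-rule exists) make each projected step a valid step of $\pta$ — and that the first coordinate of $\loc^*_\otimes$ is $\loc^*$ with $\mathbf{0}$ restricted to $\clocks$ being $\mathbf{0}$. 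Then $\pfunc^{-1}\circ\pfunc=\mathrm{id}$ is immediate from the shape of (\ref{eq:trho})/(\ref{eq:trinfpath}). For $\pfunc\circ\pfunc^{-1}=\mathrm{id}$, the point is that starting from an arbitrary $\product{\pta}{\dta_\dtloc}$-path and projecting away the $\dta$-part, the reconstruction $\pfunc$ must recover exactly the $\dta$-components we discarded: the mode sequence $\dtloc_0,\dtloc_1,\dots$ and valuations $\mu_0,\mu_1,\dots$ in the original path already satisfy conditions (\dag) (this is forced by how $\loc^*_\otimes$, $\enab_\otimes$, and $\prob_\otimes$ were defined — an action label $(a,h)$ can only appear on an enabled transition if $\mu_k\models h(b)$ for all $b$, which is exactly the defining property of $\xi_k$), and by determinism/totality of $\dta$ these are the \emph{only} components satisfying (\dag), so $\pfunc$ returns the original path. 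Hence $\pfunc$ is a bijection.

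For the second claim, non-zenoness depends only on the multiset of delay labels $\{a_{2n}\}_{n\in\Nset_0}$, and by construction $\pfunc$ leaves every delay label unchanged (clause: if $a_k\in[0,\infty)$ then $a'_k=a_k$) and preserves the position of every label; the action labels, which contribute $0$ to the zeno sum by definition of $d_n$, are merely decorated with an $h$-component. Therefore $\sum_n d_n$ is identical for $\infpath$ and $\pfunc(\infpath)$, so one is zeno iff the other is, i.e.\ one is non-zeno iff the other is.

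The main obstacle I anticipate is purely bookkeeping rather than conceptual: carefully verifying that $\pfunc(\fnpath)$ as specified by (\dag) is actually a path under $\product{\pta}{\dta_\dtloc}$ — in particular that the extended action $(a_k,\xi_k)$ is enabled (requires $\mu_k\models\xi_k(b)$ for \emph{all} $b\in\alphabet$, not just the $b$ that is actually read, which is why $\exttuples_{\dtloc}$ packages a choice of constraint for every symbol) and that the probabilistic successor $((\loc_{k+1},\dtloc_{k+1}),\nu_{k+1}\cup\mu_{k+1})$ lies in $\supp{\prob_\otimes((\loc_k,\dtloc_k),(a_k,\xi_k))}$, which unwinds to checking that $(\dtloc_k,\lbfunc(\loc_{k+1}),\xi_k(\lbfunc(\loc_{k+1})),Y\cap\dtclocks,\dtloc_{k+1})$ is the rule used, i.e.\ that $\mu_{k+1}=\reset{\mu_k}{Y\cap\dtclocks}$ — all of which follow from the determinism hypothesis but must be stated in the right order so that "the unique rule" is genuinely unique at the moment it is invoked.
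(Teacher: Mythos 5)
Your proposal is correct and follows essentially the same route as the paper, whose proof simply states that the first claim follows from the determinism and totality of DTAs and the second from the fact that $\pfunc$ preserves time elapses. Your writeup is a fully elaborated version of that argument (explicit inverse by projection, verification of well-definedness and of both compositions, and the observation that delay labels are unchanged), with no gaps.
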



Below we also show the correspondence on schedulers before and after the product construction.

\smallskip
\noindent{\textbf{Transformation $\sfunc$ for Schedulers from $\pta$ into $\product{\pta}{\dta_\dtloc}$.}}
We define the function $\sfunc$ from the set of schedulers under $\pta$ into the set of schedulers under $\product{\pta}{\dta_\dtloc}$ as follows: for any scheduler $\sigma$ for $\pta$, $\sfunc(\sigma)$ (for $\product{\pta}{\dta_\dtloc}$) is defined such that for any finite path $\fnpath$ under $\pta$ where $\fnpath=(\loc_0,\nu_0)a_0\dots a_{n-1}(\loc_n,\nu_n)$ and $\pfunc(\rho)$ given as in (\ref{eq:trho}),
\[
\sfunc(\sigma)(\pfunc(\fnpath)):=
\begin{cases}
\sigma(\fnpath) & \mbox{if }n\mbox{ is even} \\
(\sigma(\fnpath),\lambda(\rho)) & \mbox{if }n\mbox{ is odd}
\end{cases}
\]
where $\lambda(\rho)$ is
the unique function such that for each symbol $b\in\alphabet$, $\lambda(\rho)(b)$ is the clock constraint in the unique rule emitting from $q_n$ and with symbol $b$ such that $\mu_n\models\lambda(\rho)(b)$.
Note that the well-definedness of $\sfunc$ follows from Lemma~\ref{lemm:pfuncbij}.

From Lemma~\ref{lemm:pfuncbij}, the product construction, the determinism and totality of $\rules$, one can prove directly the following lemma.
\begin{lemma}\label{lemm:sfuncbij}
The function $\sfunc$ is a bijection.
\end{lemma}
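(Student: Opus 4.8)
The plan is to prove bijectivity of $\sfunc$ by exhibiting an explicit two-sided inverse, built by composing with the path-level bijection $\pfunc$ of Lemma~\ref{lemm:pfuncbij} and then projecting away the extra component that the product construction attaches to actions. Given a scheduler $\sigma'$ for $\product{\pta}{\dta_\dtloc}$, I would define a map $\Psi(\sigma')$ on finite paths of $\pta$ by $\Psi(\sigma')(\fnpath):=\sigma'(\pfunc(\fnpath))$ when $\length{\fnpath}$ is even, and $\Psi(\sigma')(\fnpath):=a$ when $\length{\fnpath}$ is odd and $\sigma'(\pfunc(\fnpath))=(a,h)\in\acts_\otimes$. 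This is well defined on every finite path of $\pta$ precisely because $\pfunc$ is a length-preserving bijection onto the finite paths of $\product{\pta}{\dta_\dtloc}$: thus $\pfunc(\fnpath)$ is always a legal finite path of the product on which $\sigma'$ is defined, its parity matches that of $\fnpath$, and so $\sigma'(\pfunc(\fnpath))$ falls in $[0,\infty)$ exactly when $\length{\fnpath}$ is even and in $\acts_\otimes$ exactly when $\length{\fnpath}$ is odd.

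First I would check that $\Psi(\sigma')$ is a legitimate scheduler for $\pta$. The type/parity condition is immediate from the construction. For enabledness I would exploit that $\inv_\otimes(\loc,\dtloc)=\inv(\loc)$ and $\enab_\otimes((\loc,\dtloc),(a,h))=\enab(\loc,a)\wedge\bigwedge_{b\in\alphabet}h(b)$: enabledness of a delay at the last configuration $((\loc_n,\dtloc_n),\nu_n\cup\mu_n)$ of $\pfunc(\fnpath)$ reduces, once the $\dtclocks$-component (on which $\inv_\otimes$ does not depend) is projected away, exactly to enabledness of the same delay at $(\loc_n,\nu_n)$ in $\pta$; and enabledness of $(a,h)$ forces $\nu_n\models\enab(\loc_n,a)$, whence well-formedness of $\pta$ together with $\supp{\prob(\loc_n,a)}\neq\emptyset$ produces a legal $a$-successor of $(\loc_n,\nu_n)$.

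Next I would verify $\sfunc\circ\Psi=\mathrm{id}$ and $\Psi\circ\sfunc=\mathrm{id}$. Since $\pfunc$ is onto the finite paths of $\product{\pta}{\dta_\dtloc}$, every such path is $\pfunc(\fnpath)$ for a unique $\fnpath$, so it suffices to evaluate both sides at paths of the form $\pfunc(\fnpath)$. For even $\length{\fnpath}$ both identities are a one-line unwinding of the definitions. For odd $\length{\fnpath}$, write $\sigma'(\pfunc(\fnpath))=(a,h)$; by construction the first component of $\sfunc(\Psi(\sigma'))(\pfunc(\fnpath))$ is $a$, and the point is that its second component $\lambda(\fnpath)$ must equal $h$. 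Here the determinism and totality of $\dta$ enter: enabledness of $(a,h)$ at $((\loc_n,\dtloc_n),\nu_n\cup\mu_n)$ forces $h\in\exttuples_{\dtloc_n}$ and $\mu_n\models h(b)$ for every $b\in\alphabet$, so $h(b)$ is the constraint of some rule emitting from the mode $\dtloc_n$ (the mode $q_n$ in the definition of $\lambda$) with symbol $b$ that $\mu_n$ satisfies; by totality at least one such rule exists, and by determinism at most one does, since distinct rules from a fixed mode--symbol pair have disjoint satisfaction sets. Hence $h(b)$ is forced to be $\lambda(\fnpath)(b)$, giving $h=\lambda(\fnpath)$. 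Reading the same computation backwards --- projecting $(\sigma(\fnpath),\lambda(\fnpath))$ onto its first coordinate returns $\sigma(\fnpath)$ --- yields $\Psi(\sfunc(\sigma))=\sigma$.

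I expect the only genuine obstacle to be this odd-length matching step, namely reconciling the auxiliary map $h$ that $\sigma'$ carries inside its extended action with the canonical map $\lambda(\fnpath)$ that $\sfunc$ recomputes from the mode reached and the clock valuation $\mu_n$; this rests entirely on the uniqueness furnished jointly by determinism and totality of the DTRA, which is the same ingredient already used implicitly in establishing that $\pfunc$ and $\sfunc$ are well defined. All remaining parts are routine bookkeeping over the product construction.
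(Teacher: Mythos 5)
Your proof is correct and follows exactly the route the paper intends: the paper gives no detailed argument for this lemma, asserting only that it follows from Lemma~\ref{lemm:pfuncbij}, the product construction, and the determinism and totality of $\rules$, and your explicit inverse $\Psi$ together with the observation that enabledness of $(a,h)$ forces $h=\lambda(\rho)$ is precisely the elaboration of that claim. No gaps.
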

Now we prove the relationship between infinite paths accepted by a DTRA before product construction and infinite paths satisfying certain Rabin condition.

We introduce more notations.
First, we lift the function $\pfunc$ to all subsets of paths in the standard fashion: for all subsets $A\subseteq \fnpaths{\pta}\cup\infpaths{\pta}$, $\pfunc(A):=\{\pfunc(\omega)\mid \omega\in A\}$.
Then for an infinite path $\infpath$ under $\product{\pta}{\dta_\dtloc}$ in the form
(\ref{eq:trinfpath}), we define
the \emph{trace} of $ \infpath  $ as an infinite word over $\cstates$ by
$\trace{  \infpath  } := \dtloc_0 \dtloc_1 \dots $ .
Finally, for any scheduler $\sigma$ for $\product{\pta}{\dta_\dtloc}$,
we define the set $\rabinp{\sigma}$ by
$$
    \rabinp{\sigma}:=\left \{
        \infpath \in \infpaths{\product{\pta}{\dta_\dtloc},\sigma} \mid
        \accept{
            \infset{
                \trace{
                    \infpath
                }
            }
        }   {
            \rabin
        }
    \right\}.
$$
Intuitively, $\rabinp{\sigma}$ is the set of infinite paths under $\product{\pta}{\dta_\dtloc}$ that meet the Rabin condition $\rabin$ from $\dta$.
The following proposition clarifies the role of $\rabinp{\sigma}$.
%
\begin{proposition}\label{prop:psfunc}
For any scheduler $\sigma$ for $\pta$ and any initial mode $q$ for $\dta$, we have $\TLang = \TAcc.$
\end{proposition}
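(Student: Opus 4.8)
The plan is to show that the bijection $\pfunc$ from Lemma~\ref{lemm:pfuncbij}, when restricted to accepted paths, lands exactly in the set of Rabin-accepting paths under the corresponding scheduler. Concretely, fix a scheduler $\sigma$ for $\pta$ and an initial mode $q$. The left-hand side is $\pfunc\left(\LangCsAqF\right)$, the image under $\pfunc$ of all infinite paths $\infpath$ under $\pta$ (following $\sigma$) with $\acccept{\tra}{(\dtloc,\zero)}{\infpath}$; the right-hand side is $\rabinp{\sfunc(\sigma)}$. Since $\pfunc$ is a bijection (Lemma~\ref{lemm:pfuncbij}) and $\sfunc$ is a bijection on schedulers (Lemma~\ref{lemm:sfuncbij}), it suffices to prove a pointwise equivalence: for every infinite path $\infpath$ under $\pta$ following $\sigma$,
\[
\acccept{\tra}{(\dtloc,\zero)}{\infpath}
\quad\Longleftrightarrow\quad
\accept{\infset{\trace{\pfunc(\infpath)}}}{\rabin}.
\]
Proving this equivalence is the heart of the argument, and the main obstacle will be carefully matching the run of $\dta$ on $\lbfunc(\infpath)$ (with the preprocessing step reading $\lbfunc(\loc^*)$ from $(\dtloc,\zero)$) against the mode-component sequence $\dtloc_0\dtloc_1\dots$ appearing in $\pfunc(\infpath)$.

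\textbf{First} I would unfold both sides. On the acceptance side: by Definition~\ref{def:fnacc}, $\acccept{\tra}{(\dtloc,\zero)}{\infpath}$ holds iff $\dtatr{(\dtloc,\zero)}{\lbfunc(\loc^*)}{(q^\star,\mu)}$ for some $(q^\star,\mu)$ (by determinism this $(q^\star,\mu)=(q^\star,\zero)$, which is exactly the mode chosen in $\loc^*_\otimes$) and the unique run $\xi$ of $\dta$ on $\lbfunc(\infpath)=a_0\,\lbfunc(\loc_2)\,a_2\,\lbfunc(\loc_4)\dots$ starting from $(q^\star,\zero)$ has $\infset{\traj{\xi}}$ Rabin-accepting. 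On the product side: the path $\pfunc(\infpath)=((\loc_0,\dtloc_0),\nu_0\cup\mu_0)\,a'_0\dots$ is built by conditions (\dag), which say precisely that $(\dtloc_0,\mu_0)=(q^\star,\zero)$ and that the sequence $(\dtloc_0,\mu_0),(\dtloc_1,\mu_1),\dots$ follows the one-step transition relation $\Rightarrow$ of $\dta$ when fed the time-elapses $a_{2k}$ and the labels $\lbfunc(\loc_{2k+2})$. The key point is that the labels of the odd-indexed locations $\loc_{2k+1}$ and the internal actions $a_{2k+1}$ are discarded by $\lbfunc(\cdot)$ on the PTA side, and mirrored on the product side by the fact that $\product{\pta}{\dta_\dtloc}$ only advances the $\dtclocks$-component on a time-elapse, and on a discrete action $(a_k,\xi_k)$ applies the rule determined by the \emph{target} label $\lbfunc(\loc_{k+1})$; the role of $\xi_k$ (forced through $\enab_\otimes$) is exactly to pin down which rule is fired, matching the unique rule selected along the run $\xi$.

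\textbf{Then} I would make the correspondence explicit: establish that the run $\xi$ of $\dta$ on $\lbfunc(\infpath)$ from $(q^\star,\zero)$ has its mode sequence $\traj{\xi}$ equal to $\dtloc_0\dtloc_1\dots = \trace{\pfunc(\infpath)}$. This follows by an induction on the prefix length, using determinism and totality of $\dta$ at each step so that the "unique" choices in (\dag) coincide with the unique choices in the run $\xi$. (Here one must be slightly careful about indexing: both $\dta$ reading $\lbfunc(\infpath)$ and the product PTA interleave a time-elapse and a label-read per two-step block, so the mode sequences genuinely line up index-for-index after the common preprocessing step that produces $\dtloc_0=q^\star$.) Once the mode sequences agree, $\infset{\traj{\xi}}=\infset{\trace{\pfunc(\infpath)}}$, so one side's Rabin-acceptance condition holds iff the other's does, giving the pointwise equivalence. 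Combining this with the bijectivity of $\pfunc$ (and the definition of $\rabinp{\sfunc(\sigma)}$ over $\infpaths{\product{\pta}{\dta_\dtloc},\sfunc(\sigma)}$, which matches $\pfunc$ applied to $\infpaths{\pta,\sigma}$ by Lemma~\ref{lemm:pfuncbij} and Lemma~\ref{lemm:sfuncbij}) yields $\TLang = \TAcc$, completing the proof.

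\textbf{The hard part} will be the bookkeeping in that induction: precisely tracking how the omitted components on the PTA side (initial location, odd locations, internal actions, clock valuations) are reflected in the product construction, and verifying that the clock-constraint tuple $\xi_k$ selected in (\dag) is genuinely the one that makes $\prob_\otimes$ fire the rule traversed by the run $\xi$. This is where determinism of $\dta$ does all the work — it guarantees that "the rule with symbol $b$ whose guard $\mu_k$ satisfies" is unique, so $\xi_k$ (hence the extended action $(a_k,\xi_k)$, hence the successor mode $\dtloc_{k+1}$) is forced, and it coincides with the step taken by $\xi$.
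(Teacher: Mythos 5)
Your proposal is correct and follows essentially the same route as the paper's own proof: both reduce the set equality to the pointwise identity $\traj{\xi_\infpath}=\trace{\pfunc(\infpath)}$ (so that the two Rabin conditions coincide on $\infset{\cdot}$) and then invoke the bijectivity of $\pfunc$ and the scheduler correspondence $\sfunc$. The paper simply asserts this identity as immediate from the definition of $\pfunc$ via conditions (\dag), whereas you spell out the prefix induction and the role of determinism/totality in forcing $\xi_k$; that is just a more detailed rendering of the same argument.
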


Finally, we demonstrate the relationship between acceptance probabilities before product construction and Rabin(-accepting) probabilities after product construction.
We also clarify the probability of zenoness before and after the product construction.
The proof follows standard argument from measure theory.

\begin{proposition}\label{thm:main}
For any scheduler $\sigma$ for $\pta$ and mode $q$, the followings hold:
\begin{compactitem}
\item
{\small $
    \pr
        {\dtloc}
        {\sigma}
        =
            \probm
                ^{\pta,\sigma}
                \left(
                    \LangCsAqF
                \right)
        =
            \probm
                ^{\product{\pta}{\dta_\dtloc},\theta(\sigma)}
                \left(
                    \TAcc
                \right)
    ;
$}
\item
{\small$
    \probm
        ^{\pta,\sigma}
        \left( \{
                \infpath \mid \infpath \mbox{ is zeno}
            \}
        \right)
    =
    \probm
        ^{\product{\pta}{\dta_\dtloc},\theta(\sigma)}
        \left( \{
                \infpath' \mid \infpath' \mbox{ is zeno}
            \}
        \right).
$}
\end{compactitem}
\end{proposition}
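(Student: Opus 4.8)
The first equality is exactly Definition~\ref{def:accprob}, so the real content is the second equality together with the statement on zeno paths, and my plan is to reduce both to a single fact: the map $\pfunc$ of Lemma~\ref{lemm:pfuncbij}, restricted to $\infpaths{\pta,\sigma}$, is a \emph{measure isomorphism} onto $\infpaths{\product{\pta}{\dta_\dtloc},\sfunc(\sigma)}$ --- that is, a bijection between these sets with both $\pfunc$ and $\pfunc^{-1}$ measurable and with $\probm^{\product{\pta}{\dta_\dtloc},\sfunc(\sigma)}=\pfunc_\ast\probm^{\pta,\sigma}$. Granting this, the second equality is immediate: $\TLang=\pfunc(\LangCsAqF)$ by definition, Proposition~\ref{prop:psfunc} gives $\pfunc(\LangCsAqF)=\rabinp{\sfunc(\sigma)}$, and the measure-preservation property then transports $\probm^{\pta,\sigma}(\LangCsAqF)$ to $\probm^{\product{\pta}{\dta_\dtloc},\sfunc(\sigma)}(\rabinp{\sfunc(\sigma)})$. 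Similarly, the second clause of Lemma~\ref{lemm:pfuncbij} says an infinite path is zeno iff its $\pfunc$-image is; combined with the fact that $\pfunc$ restricts to a bijection between $\infpaths{\pta,\sigma}$ and $\infpaths{\product{\pta}{\dta_\dtloc},\sfunc(\sigma)}$, this shows $\pfunc$ carries the set of zeno paths of $\pta$ exactly onto the set of zeno paths of $\product{\pta}{\dta_\dtloc}$, and as both sets are measurable the measure-preservation property delivers the second item.

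To prove the measure isomorphism I would proceed in three steps. First, I would check that $\pfunc$ restricts to a bijection between paths following $\sigma$ and paths following $\sfunc(\sigma)$; this is read off the definitions of $\pfunc$ and $\sfunc$ (a finite path follows $\sigma$ precisely when its image follows $\sfunc(\sigma)$) together with Lemmas~\ref{lemm:pfuncbij} and~\ref{lemm:sfuncbij}. Second, I would observe that conditions~(\dag) make $\pfunc$ prefix-preserving --- $\fnpath$ is a prefix of $\infpath$ iff $\pfunc(\fnpath)$ is a prefix of $\pfunc(\infpath)$ --- so that $\pfunc$ maps each cylinder $\cyl(\fnpath)$ bijectively onto $\cyl(\pfunc(\fnpath))$, and conversely every cylinder under $\product{\pta}{\dta_\dtloc}$ is of this shape (using that $\pfunc$ is also bijective on finite paths). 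Since cylinders generate both sigma-algebras, this yields measurability of $\pfunc$ and $\pfunc^{-1}$. Third --- the crux --- I would verify
\[
\probm^{\pta,\sigma}(\cyl(\fnpath))=\probm^{\product{\pta}{\dta_\dtloc},\sfunc(\sigma)}(\cyl(\pfunc(\fnpath)))
\]
for every finite path $\fnpath=s_0a_0\dots a_{n-1}s_n$ following $\sigma$, comparing the two defining products of one-step kernel values factor by factor. For a time step ($a_k\in[0,\infty)$), both $\probk$ and $\probk_\otimes$ contribute $1$. For an action step ($a_k\in\acts$, so $a'_k=(a_k,\xi_k)$ as in~(\ref{eq:trho})), the definition of $\prob_\otimes$ together with the determinism of $\rules$ forces, for the prescribed target $(\loc_{k+1},\dtloc_{k+1})$, a unique successor mode and a unique $\dta$-side reset set, so the reset sets $Y'\subseteq\clocks_\otimes$ contributing to $\probk_\otimes$ at that step are in bijection with the sets $X\subseteq\clocks$ contributing to $\probk$ on the $\pta$-side; hence the two factors agree. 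Taking the product over $k$ gives the displayed identity.

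The argument then concludes by the standard uniqueness statement: a probability measure is determined by its values on the generating family of cylinder sets, so $\pfunc_\ast\probm^{\pta,\sigma}$ and $\probm^{\product{\pta}{\dta_\dtloc},\sfunc(\sigma)}$, which agree on all cylinders by the third step, coincide on $\mathcal{F}^{\product{\pta}{\dta_\dtloc},\sfunc(\sigma)}$; instantiating at $\rabinp{\sfunc(\sigma)}=\pfunc(\LangCsAqF)$ and at the set of zeno paths yields the two items. I expect the main obstacle to be exactly this third step: one must line up the successor configurations and reset sets on the two sides carefully enough to see that the kernel values are \emph{literally} equal --- no mass double-counted, none dropped --- and it is here, via the definitions of $\exttuples_\dtloc$ and $\prob_\otimes$, that the determinism and totality of $\dta$ are genuinely used.
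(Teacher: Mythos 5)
Your proposal is correct and takes essentially the same route as the paper: both reduce the claim to the cylinder-set identity $\probm^{\pta,\sigma}(\cyl(\fnpath))=\probm^{\product{\pta}{\dta_\dtloc},\sfunc(\sigma)}(\cyl(\pfunc(\fnpath)))$, invoke uniqueness of measures agreeing on a generating pi-system, and then conclude via Proposition~\ref{prop:psfunc} for the acceptance sets and Lemma~\ref{lemm:pfuncbij} for zenoness. Your factor-by-factor verification of the kernel values (matching reset sets $Y'\subseteq\clocks_\otimes$ with $X\subseteq\clocks$ via the determinism of $\rules$) merely spells out what the paper compresses into ``the product construction preserves transition probabilities.''
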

A side result from Proposition~\ref{thm:main} says that $\sfunc$ preserves time-divergence for schedulers before and after product construction.
From Proposition~\ref{thm:main} and Lemma~\ref{lemm:sfuncbij}, one immediately obtains the following result which transforms the {\sc PTA-DTRA} problem into Rabin(-accepting) probabilities under the product PTA.

\begin{corollary}\label{crly:opt}
For any initial mode $q$,
$
    \opt_\sigma\pr{\dtloc}{\sigma}
        =
            \opt_{\sigma'}
            \probm^{\product{\pta}{\dta_\dtloc},\sigma'}\left(
            \rabinp{\sigma'}
            \right)
$
where $\opt$ refers to either $\inf$ (infimum) or $\sup$ (supremum),
$\sigma$ (resp. $\sigma'$) range over all time-divergent schedulers
for $\pta$ (resp. $\product{\pta}{\dta_\dtloc}$).
\end{corollary}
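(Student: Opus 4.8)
The plan is to read the claimed identity as a change of variables over schedulers, carried out by the bijection $\sfunc$ supplied by Lemma~\ref{lemm:sfuncbij}. Concretely, I would establish two things. First, that $\sfunc$ restricts to a bijection between the time-divergent schedulers of $\pta$ and the time-divergent schedulers of $\product{\pta}{\dta_\dtloc}$. Second, that along this bijection the value $\pr{\dtloc}{\sigma}$ is carried exactly onto $\probm^{\product{\pta}{\dta_\dtloc},\sfunc(\sigma)}\left(\rabinp{\sfunc(\sigma)}\right)$. Once both are in hand, the two optimization problems range over index sets in bijection with each other and take the same value at matching indices, so their $\inf$'s (resp. $\sup$'s) coincide, which is the corollary.

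For the first point I would argue as follows. By Lemma~\ref{lemm:sfuncbij}, $\sfunc$ is already a bijection between all schedulers of $\pta$ and all schedulers of $\product{\pta}{\dta_\dtloc}$, so it only remains to check that $\sigma$ is time-divergent iff $\sfunc(\sigma)$ is. This falls out of the second clause of Proposition~\ref{thm:main}, which states that the measure of the zeno set is preserved, i.e.\ $\probm^{\pta,\sigma}\left(\{\infpath\mid\infpath\ \mbox{is zeno}\}\right)=\probm^{\product{\pta}{\dta_\dtloc},\sfunc(\sigma)}\left(\{\infpath'\mid\infpath'\ \mbox{is zeno}\}\right)$; since time-divergence of a scheduler is by definition the vanishing of that measure, $\sigma$ and $\sfunc(\sigma)$ are simultaneously time-divergent or not. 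Hence $\sfunc$ maps the time-divergent class into the time-divergent class, and its inverse does too, so it restricts to a bijection between the two classes.

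For the second point I would simply unfold Proposition~\ref{thm:main}: its first clause, after expanding the abbreviation $\TAcc=\rabinp{\sfunc(\sigma)}$, reads $\pr{\dtloc}{\sigma}=\probm^{\product{\pta}{\dta_\dtloc},\sfunc(\sigma)}\left(\rabinp{\sfunc(\sigma)}\right)$ for every scheduler $\sigma$. Combining this with the first point, the set $\{\pr{\dtloc}{\sigma}\mid\sigma\ \mbox{time-divergent for}\ \pta\}$ equals the set $\{\probm^{\product{\pta}{\dta_\dtloc},\sigma'}\left(\rabinp{\sigma'}\right)\mid\sigma'\ \mbox{time-divergent for}\ \product{\pta}{\dta_\dtloc}\}$ as subsets of $[0,1]$: the reindexing $\sigma\mapsto\sfunc(\sigma)$ is a bijection between the two index sets and the corresponding values agree term by term. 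Applying $\opt\in\{\inf,\sup\}$ to these two equal (non-empty) sets of reals finishes the argument.

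I do not expect a genuine obstacle here: the statement is a bookkeeping corollary, and the only ingredient that is not purely formal — that the product construction neither creates nor destroys zeno runs, so that $\sfunc$ respects time-divergence in both directions — has already been absorbed into the zenoness clause of Proposition~\ref{thm:main}. The only points I would be careful to mention are that the optimum is taken over a non-empty family (guaranteed by the standing assumption that $\pta$ admits a time-divergent scheduler) and that $\rabinp{\sigma'}$ is measurable (already noted when Proposition~\ref{thm:main} was stated), so that the right-hand side is well-defined.
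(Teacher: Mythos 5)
Your proposal is correct and matches the paper's own (implicit) argument exactly: the paper derives the corollary "immediately" from Lemma~\ref{lemm:sfuncbij} (the scheduler bijection) together with the two clauses of Proposition~\ref{thm:main}, noting as you do that the zenoness clause is what makes $\sfunc$ restrict to a bijection on time-divergent schedulers. Nothing further is needed.
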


\noindent{\bf Solving Rabin Probabilities.} We follow the approach in~\cite{DBLP:conf/qest/Sproston11}
to solve Rabin probabilities over PTAs. Below we briefly describe the approach.
The approach can be divided into two steps.
The first step is to ensure time-divergence.
This is achieved by (i) making a copy for every location in the PTA, (ii) enforcing a transition
from every location to its copy to happen after $1$ time-unit elapses, (iii)
enforcing a transition from every copy location back to the original one immediately with no time-delay,
and (iv) putting a special label \textit{tick} in every copy.
Then time-divergence is guaranteed by adding the label  \textit{tick} to the Rabin condition.
The second step is to transform the problem into limit Rabin properties over MDPs~\cite[Theorem 10.127]{DBLP:books/daglib/0020348}.
This step constructs an MDP $\Region{\product{\pta}{\dta_\dtloc}}$ from the
PTA $ \product{\pta}{\dta_\dtloc} $ through a \emph{region-graph} construction
so that the problem is reduced to solving limit Rabin properties over $\Region{\product{\pta}{\dta_\dtloc}}$.
\emph{Regions} are finitely-many equivalence classes of clock valuations that serve as a finite abstraction which capture exactly reachability behaviours over timed transitions (cf.~\cite{DBLP:journals/tcs/AlurD94}).
Then standard methods based on \emph{maximal end components} (MECs) are applied to $\Region{\product{\pta}{\dta_\dtloc}}$.
In detail, the algorithm computes the reachability probability to MECs that satisfy the Rabin acceptance condition.
In order to guarantee time-divergence, the algorithm only picks up MECs with at least one location that has a \textit{tick} label.
Based on this approach, our result leads to an algorithm for solving the problem PTA-DTRA.

%

Note that in $\product{\pta}{\dta_\dtloc}$,
although the size of $\acts_\otimes$
may be exponential due to possible exponential blow-up from $\exttuples_{\dtloc}$,
one easily sees that $ | \locs_\otimes | $ is $ |\locs| \cdot |\cstates| $ and
$ | \clocks_\otimes | = | \clocks | + | \dtclocks| $.
Hence, the size of $ \Region{\product{\pta}{\dta_\dtloc}} $ is still exponential in
the sizes of $\pta$ and $\dta$.
It follows that $\opt_\sigma\pr{\dtloc}{\sigma}$ can be calculated in exponential time
from the MEC-based algorithm illustrated in~\cite[Theorem 10.127]{DBLP:books/daglib/0020348}, as is demonstrated by
the following proposition.

\begin{proposition}\label{prop:exptime}
The problem {\sc PTA-DTRA} is in EXPTIME in the size of the input PTA and DTRA.
\end{proposition}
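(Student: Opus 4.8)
The plan is to chain Corollary~\ref{crly:opt} together with the two-step approach for solving Rabin probabilities recalled just above, and then to do the size bookkeeping carefully so that everything stays within a \emph{single} exponential in $|\pta|+|\dta|$.

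First I would invoke Corollary~\ref{crly:opt}: it already tells us that $\opt_\sigma\pr{\dtloc}{\sigma}$ equals the optimal Rabin probability $\opt_{\sigma'}\probm^{\product{\pta}{\dta_\dtloc},\sigma'}(\rabinp{\sigma'})$ over the product PTA, with $\sigma'$ ranging over time-divergent schedulers; hence it suffices to bound the cost of computing that optimal Rabin probability. Next I would apply step one of the recalled approach, namely attaching the $\mathit{tick}$-gadget (a labelled copy of each location, the forced $1$-time-unit transition to the copy, the $0$-delay transition back, and adding $\mathit{tick}$ to the Rabin condition). This only doubles the number of locations, adds one clock and one Rabin pair, so it is a polynomial-size transformation; call the result $\pta'$. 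Then I would apply step two: build the region MDP $\Region{\pta'}$ and reduce, via \cite[Theorem 10.127]{DBLP:books/daglib/0020348}, to a limit-Rabin property over $\Region{\pta'}$ that is additionally required to visit a $\mathit{tick}$-labelled state infinitely often.

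The heart of the argument is the size estimate for $\Region{\pta'}$. The underlying PTA $\product{\pta}{\dta_\dtloc}$ has $|\locs|\cdot|\cstates|$ locations and $|\clocks|+|\dtclocks|$ clocks, and the gadget adds only a polynomial overhead. Its action alphabet $\acts_\otimes=\acts\times\bigcup_\dtloc\exttuples_\dtloc$ may be exponential, since $\bigcup_\dtloc\exttuples_\dtloc$ can be exponential in $|\rules|$; but each extended action together with its enabling condition is still describable with polynomially many bits, so $|\acts_\otimes|\le 2^{\mathrm{poly}(|\pta|+|\dta|)}$. The number of clock regions is, as usual, exponential in the number of clocks and in the bit-size of the largest constant occurring in the clock constraints (of $\pta$, of $\dta$, and of the gadget), hence again $2^{\mathrm{poly}(|\pta|+|\dta|)}$. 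Since the states of $\Region{\pta'}$ lie in (locations)$\times$(regions) and its transitions are indexed by (extended actions)$\times$(target regions), and a product of single exponentials is a single exponential, $\Region{\pta'}$ has size $2^{\mathrm{poly}(|\pta|+|\dta|)}$.

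Finally, I would invoke the MEC-based algorithm of \cite[Theorem 10.127]{DBLP:books/daglib/0020348}: on an MDP of size $N$ it computes the optimal Rabin probability in time polynomial in $N$ (MEC decomposition, Rabin-acceptance check on each MEC --- keeping only those MECs that contain a $\mathit{tick}$-labelled state, to enforce time-divergence --- followed by a reachability-probability computation). With $N=2^{\mathrm{poly}(|\pta|+|\dta|)}$ this is still $2^{\mathrm{poly}(|\pta|+|\dta|)}$, i.e.\ EXPTIME. I expect the one genuinely delicate point to be the bookkeeping of the previous paragraph: one must check that the two independent sources of blow-up --- the extended-action alphabet and the region construction --- do not compound into a double exponential; writing both bounds as $2^{\mathrm{poly}}$ in the input size makes this transparent, but it does need to be stated explicitly.
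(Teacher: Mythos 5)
Your proposal is correct and follows essentially the same route as the paper: reduce via Corollary~\ref{crly:opt} to optimal Rabin probabilities on the product PTA, apply the \textit{tick}-gadget and region-graph construction, observe that $|\locs_\otimes|$ and $|\clocks_\otimes|$ stay polynomial so that the region MDP (and the exponential action alphabet) remain singly exponential, and then run the polynomial-time MEC-based algorithm of \cite[Theorem 10.127]{DBLP:books/daglib/0020348}. Your explicit remark that the two sources of blow-up multiply rather than compound is exactly the point the paper relies on, just stated more carefully.
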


It is proved in \cite{LaroussinieS07} that the reachablity-probability problem for arbitrary PTAs is \emph{EXPTIME}-complete.
Since Rabin acceptance condition subsumes reachability, one obtains that the problem PTA-DTRA is EXPTIME-hard (cf. Appendix~\ref{app:hardness} for details).
Thus we obtain the main result of this section which settles the computational complexity of the problem PTA-DTRA.

\begin{theorem}
The PTA-DTRA problem is EXPTIME-complete.
\end{theorem}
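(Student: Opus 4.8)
The plan is to obtain the two matching bounds and combine them. The EXPTIME upper bound is already in hand: Proposition~\ref{prop:exptime} shows that $\opt_\sigma\pr{\dtloc}{\sigma}$ is computable in time exponential in the sizes of $\pta$ and $\dta$, by first forming the product PTA $\product{\pta}{\dta_\dtloc}$ (whose region graph $\Region{\product{\pta}{\dta_\dtloc}}$ has exponential size) and then running the MEC-based algorithm for limit Rabin properties over the resulting MDP, as in~\cite[Theorem~10.127]{DBLP:books/daglib/0020348}. Hence it remains only to establish EXPTIME-hardness, and the theorem follows immediately.

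For the lower bound I would reduce from the maximal/minimal reachability-probability problem for PTAs, which is EXPTIME-complete by~\cite{LaroussinieS07}. Given a PTA $\pta$ with a distinguished set of target locations $T\subseteq\locs$, I would relabel $\pta$ so that exactly the locations in $T$ carry a fresh atomic proposition $\mathtt{goal}$, and then build a \emph{fixed}, constant-size DTRA $\dta$ over $\alphabet=2^{\ap}$ with two modes $q_0$ (initial) and $q_{\mathrm{acc}}$: $\dta$ stays in $q_0$ while it reads labels not containing $\mathtt{goal}$, moves to $q_{\mathrm{acc}}$ upon reading a label containing $\mathtt{goal}$, and loops in $q_{\mathrm{acc}}$ forever, with Rabin condition $\rabin=\{(\emptyset,\{q_{\mathrm{acc}}\})\}$. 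One checks that $\dta$ is deterministic and total (missing rules are self-loops in $q_0$), that an infinite path of $\pta$ is accepted by $\dta$ w.r.t. $(q_0,\zero)$ exactly when it visits $T$, and therefore that $\pr{q_0}{\sigma}$ is the probability of eventually reaching $T$ under $\sigma$; the reduction is clearly polynomial-time, so PTA-DTRA is EXPTIME-hard, and combined with Proposition~\ref{prop:exptime} it is EXPTIME-complete.

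The main obstacle is to make this reduction robust against the two features of PTA-DTRA that are absent from the bare reachability problem. First, the interpretation $\lbfunc(\cdot)$ of Definition~\ref{def:interpretation} drops the initial location and every location entered immediately after a time-elapse, so the $\mathtt{goal}$ label must be placed so that its occurrence is still observed — this is why the gadget absorbs into $q_{\mathrm{acc}}$ once $\mathtt{goal}$ is seen, making the discarded intermediate labels irrelevant, and, if required, the target can be moved onto locations entered via a discrete action. Second, PTA-DTRA ranges only over time-divergent schedulers, whereas EXPTIME-completeness of reachability is stated for all schedulers; this is handled by the standard $\mathtt{tick}$ gadget described in the paragraph on solving Rabin probabilities (copy each location, force a $1$-time-unit transition to the copy and back, and add $\mathtt{tick}$ to the Rabin condition), which forces time divergence without changing reachability probabilities. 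Once these routine points are verified — the details are deferred to Appendix~\ref{app:hardness} — the theorem is proved.
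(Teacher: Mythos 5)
Your proposal is correct and follows essentially the same route as the paper: the upper bound is exactly Proposition~\ref{prop:exptime}, and the lower bound is the same polynomial-time reduction from the EXPTIME-complete PTA reachability-probability problem via a fresh atomic proposition on the target locations and a fixed two-mode absorbing DTRA with Rabin pair $(\emptyset,\{q_1\})$, which is precisely the construction in Appendix~\ref{app:hardness}. Your extra remark about reconciling the time-divergent-scheduler restriction with the reachability lower bound addresses a point the paper's appendix silently glosses over, so raising it (and the \textit{tick} gadget as a remedy) is, if anything, more careful than the published argument.
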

%
%
\begin{remark}
The main novelty for our product construction is that by adopting extended actions (i.e. $\exttuples_{\dtloc}$) and integrating them into enabling condition and probabilistic transition function, the product PTA can know which rule to use from the DTA upon any symbol to be read. This solves the problem that probabilistic jumps can lead to different locations, causing the usage of different rules from the DTA. Moreover, our product construction ensures EXPTIME-completeness of the problem.
\end{remark}

\vspace{-1.8em}
\section{The PTA-TRA problem}
\vspace{-1em}
In this section, we study the PTA-TRA problem where the input timed automaton needs not to be deterministic.
In contrast to the deterministic case (which is shown to be decidable and EXPTIME-complete in the previous section), 
we show that the problem is undecidable.

\smallskip
\noindent{\bf The Main Idea.} The main idea for the undecidability result is to reduce the universality problem of timed automata to the PTA-TRA problem. The universality problem over timed automata is well-known to be undecidable, as follows.

%
%
\begin{lemma}{(\cite[Theorem 5.2]{DBLP:journals/tcs/AlurD94})}\label{lemm:undecidability}
Given a timed automaton over an alphabet $\alphabet$ and an initial mode, the problem of deciding whether it accepts all time-divergent timed words w.r.t B\"{u}chi acceptance condition over $\alphabet$ is undecidable.
\end{lemma}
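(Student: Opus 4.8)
The plan is to follow the classical Alur--Dill argument and reduce from the \emph{recurrence problem} for deterministic two-counter (Minsky) machines --- given such a machine $M$, decide whether the unique computation of $M$ from its initial configuration is infinite and visits a designated control state infinitely often --- which is undecidable. (One may just as well reduce from the non-halting problem; the only change is where the B\"uchi set is placed.) First I would fix a finite alphabet $\alphabet$ and an encoding of computations of $M$ as time-divergent timed words: the $j$-th configuration of $M$ is written inside the time interval $[j,j+1)$, its control state is recorded by a dedicated symbol at the left endpoint, and the value of each counter is encoded by the number of occurrences of a counter-specific marker symbol in that interval, the fractional timestamps of the markers being irrelevant except that they must be reproduced across a ``copy'' step. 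Concatenating the interval-blocks of an infinite run yields a time-divergent timed word; call a word of this shape an \emph{$M$-encoding}, and a \emph{valid} one if it really corresponds to the computation of $M$.

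Next I would construct a (nondeterministic) timed automaton $\dta_M$ over $\alphabet$ with B\"uchi acceptance such that $\dta_M$ accepts a time-divergent timed word $w$ \emph{iff} $w$ is \emph{not} a valid recurring computation of $M$. The automaton uses nondeterminism to guess which of finitely many kinds of defect occurs: $w$ is not syntactically an $M$-encoding; the sequence of encoded control states violates the transition relation of $M$; some counter is updated incorrectly between two consecutive blocks; or $w$ is a valid encoding whose run is finite or visits the recurrence state only finitely often. Each guess needs only finite memory together with \emph{one} extra clock. The delicate case is a counter ``copying error'', where $\dta_M$ must certify that some marker at fractional offset $r$ in block $j$ has no twin at offset $r$ in block $j{+}1$ (or vice versa): this is done by resetting a fresh clock $y$ on the guessed marker and, in the next block, guessing two \emph{consecutive} markers whose $y$-values straddle $1$ --- so the gap they bound contains the instant where the twin should be --- with symmetric gadgets for the mirror direction and for the increment / decrement / zero-test variants; the ``next marker'' constraint is enforced structurally by the mode layout. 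Since a word that is not an $M$-encoding at all is caught by the purely syntactic branch, $\dta_M$ indeed accepts \emph{all} time-divergent non-encodings.

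Finally I would equip the ``checking'' branch (the one that has verified $w$ is a valid encoding) with a B\"uchi set so that it accepts exactly when the encoded run fails recurrence: it nondeterministically guesses the last visit to the designated control state and thereafter runs forever in a B\"uchi-accepting mode that has no successor on any further visit (and a finite run of $M$ is likewise accepted). Combining the branches, $\dta_M$ accepts every time-divergent timed word over $\alphabet$ iff $M$ has no recurring computation; as the latter is undecidable, universality of timed B\"uchi automata (hence of general TAs with B\"uchi acceptance) is undecidable, which is the claimed statement.

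The main obstacle is the clock-level design of the copying-error gadget: one must express, with a single clock and finitely many modes, ``some marked event in this unit block has no twin one time-unit later, or some marked event here has no twin one time-unit earlier'', while simultaneously keeping $\dta_M$ correct on inputs that do not resemble encodings and making sure that time-divergence of the input is matched exactly by the unit-per-configuration layout. The remaining parts --- syntactic well-formedness, control-flow checking, and the B\"uchi recurrence monitor --- are routine finite-memory bookkeeping. The known proof of this lemma is in \cite[Theorem~5.2]{DBLP:journals/tcs/AlurD94}.
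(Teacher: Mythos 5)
Your reconstruction is the standard Alur--Dill argument (reduction from the recurrence/non-halting problem of two-counter machines, unit-interval encoding of configurations, and a nondeterministic complement automaton whose one-clock gadget catches copying errors by straddling the value $1$ between consecutive markers), which is precisely the proof of \cite[Theorem~5.2]{DBLP:journals/tcs/AlurD94} that the paper cites without reproving. The proposal is correct and takes essentially the same approach as the source the paper relies on.
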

Although Lemma \ref{lemm:undecidability} is on  B\"{u}chi acceptance condition, it holds also for Rabin acceptance condition since Rabin acceptance condition extends  B\"{u}chi acceptance condition.
Actually the two acceptance conditions are equivalent over timed automata (cf.~\cite[Theorem 3.20]{DBLP:journals/tcs/AlurD94}). We also remark that Lemma \ref{lemm:undecidability} was originally for multiple initial modes, which can be mimicked by a single initial mode through aggregating all rules emitting from some initial mode as rules emitting from one initial mode. 

Now we prove the undecidability result as follows.
The proof idea is that we construct a PTA that can generate every time-divergent timed words with probability $1$ by some time-divergent scheduler.
Then the TRA accepts all time-divergent timed words iff the minimal probability that the PTA observes the TRA equals $1$.
\begin{theorem}\label{thm:traundecidability}
Given a PTA $\pta$ and a TRA $\dta$, the problem to decide whether the minimal probability
that $\pta$ \emph{observes} $\dta$ (under a given initial mode) is equal to $1$ is undecidable.
\end{theorem}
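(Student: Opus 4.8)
The plan is to reduce the undecidable universality problem for timed automata (Lemma~\ref{lemm:undecidability}) to the question of whether the minimal observation probability equals $1$. Given an arbitrary TRA $\dta=(\cstates,\alphabet,\dtclocks,\rules,\rabin)$ over the alphabet $\alphabet$ together with an initial mode, I would first build a ``universal generator'' PTA $\pta$ whose set of time-divergent behaviours, read through the labelling interpretation of Definition~\ref{def:interpretation}, is exactly the set of all time-divergent timed words over $\alphabet$. Concretely, take $\ap$ so that $2^{\ap}=\alphabet$ (adding dummy propositions if needed so every symbol of $\alphabet$ is realized as a label), give $\pta$ one location $\ell_a$ for each symbol $a\in\alphabet$ with $\lbfunc(\ell_a)=a$ and trivial invariant $\true$, plus an initial location; from every location there is a single action whose enabling condition is $\true$ and whose probabilistic transition function is the uniform distribution over $\{(\emptyset,\ell_a)\mid a\in\alphabet\}$ (no clock resets, and $\pta$ need use no clocks of its own). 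Then, for any target time-divergent timed word $w=a_0a_1\dots$ over $\alphabet$ (with $a_{2n}\in[0,\infty)$, $a_{2n+1}\in\alphabet$), there is a scheduler $\sigma_w$ that picks the delays $a_{2n}$ for the time-elapse steps and, since every discrete successor has positive probability, the cylinder along $w$ has positive probability; more importantly, one can choose a \emph{single} time-divergent scheduler $\sigma$ that, with probability $1$, produces \emph{some} time-divergent timed word, because the only nondeterminism lies in the probabilistic branching and every branch is labelled by a distinct symbol.

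The key observation linking the two problems is then: $\dta$ accepts every time-divergent timed word over $\alphabet$ (w.r.t.\ the given initial mode and Rabin, equivalently B\"uchi, acceptance) if and only if every infinite non-zeno path of $\pta$ is accepted by $\dta$ in the sense of Definition~\ref{def:fnacc}, i.e.\ if and only if $\pr{\dtloc}{\sigma}=1$ for \emph{every} time-divergent scheduler $\sigma$, i.e.\ if and only if $\inf_\sigma\pr{\dtloc}{\sigma}=1$. One direction is immediate: if $\dta$ is universal, every path's induced timed word is accepted, so every $\pr{\dtloc}{\sigma}$ equals $1$. For the converse, if $\dta$ is not universal there is a time-divergent timed word $w$ rejected by $\dta$; I would exhibit a time-divergent scheduler $\sigma$ of $\pta$ under which the set of paths whose induced timed word ``avoids being accepted'' has positive probability — the cleanest route is to design $\sigma$ so that, with probability $1$, the generated timed word is rejected, by having $\sigma$ blindly commit to the delay sequence of $w$ and exploit that $\dta$'s rejection of $w$ depends only on the symbol/delay sequence (the probabilistic choice of which symbol is emitted is irrelevant because $\sigma$ never observes it when choosing future delays, and we can additionally restrict $\pta$ so that only the ``$w$-shaped'' symbol sequence is producible, or argue directly that the rejected cylinder has positive measure and that the complement event has probability strictly below $1$). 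A small amount of care is needed here, which I flag below.

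The main obstacle is precisely this converse direction: $\pta$ as sketched generates \emph{all} timed words, so a single rejected word $w$ only gives a probability-zero cylinder, whereas we need $\inf_\sigma\pr{\dtloc}{\sigma}<1$, i.e.\ a \emph{positive-probability} set of rejected paths under some scheduler. The fix is to make the rejection ``robust'': I would either (i) strengthen $\pta$ so that each symbol label is carried by several locations and the scheduler can steer the probabilistic mass, or, more simply, (ii) observe that the rejected word $w$ can be taken from a rejected \emph{run tree} and use the topological closure / finite-prefix structure of Rabin rejection to show that a whole basic cylinder of timed words around a finite prefix of $w$ is rejected, which does have positive probability under the scheduler that commits to that prefix's delays and then plays arbitrarily; the determinism-free nature of $\pta$ means the scheduler can force the prefix while the probabilistic branching only chooses among already-rejected continuations. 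Assembling this gives $\inf_\sigma\pr{\dtloc}{\sigma}=1 \iff \dta$ is universal, and since the latter is undecidable by Lemma~\ref{lemm:undecidability} (and Rabin subsumes B\"uchi over TAs), the theorem follows. I would also note explicitly that the construction uses only a time-divergent scheduler, as required, since $\pta$'s delays can always be chosen to diverge.
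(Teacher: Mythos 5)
Your overall plan --- reducing from the universality problem of Lemma~\ref{lemm:undecidability} via a PTA that ``generates'' timed words over $\alphabet$ --- is the same as the paper's, but your generator construction has a genuine gap that your proposed repairs do not close. You let each location have a \emph{single} action whose transition function is the \emph{uniform} distribution over all symbol-locations, so the scheduler controls only the delays while the symbol sequence is chosen probabilistically. As you yourself observe, this breaks the converse direction: if $\dta$ rejects exactly one time-divergent word (or, more generally, a measure-zero set of words under every scheduler), then every scheduler still yields acceptance probability $1$, so $\inf_\sigma\pr{\dtloc}{\sigma}=1$ while $\dta$ is not universal. Your fix (ii) is unsound: Rabin acceptance is a tail property, so rejection of a word is \emph{not} witnessed by any finite prefix, and there is no reason a basic cylinder around a prefix of a rejected word should consist entirely of rejected words. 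Your fix (i) (``several locations per symbol, steer the probabilistic mass'') is too vague to constitute a proof and, as stated, still leaves the symbol choice probabilistic.

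The correct repair --- and what the paper actually does --- is to remove the probabilistic branching over symbols entirely: take $\acts=\alphabet$, one action per symbol, with $\prob(\ell,b)$ the \emph{Dirac} distribution at $(\emptyset,\ell_b)$. Then the scheduler chooses both the delays and the symbols, each time-divergent scheduler $\sigma_w$ produces exactly one timed word $w$ with probability $1$, and $\pr{\qinit}{\sigma_w}\in\{0,1\}$ according to whether $\dta$ accepts $w$. A single rejected word thus immediately gives a scheduler with acceptance probability $0$, so the infimum over time-divergent schedulers equals $1$ iff $\dta$ is universal. (The paper also adds a fresh initial mode $\qinit$ and fresh symbol $\ntaap{0}$ to absorb the initial-label preprocessing of Definition~\ref{def:fnacc}; your sketch leaves this detail implicit but it is routine.) With the Dirac construction in place the rest of your argument goes through; without it, the reduction as you wrote it is incorrect.
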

\begin{proof}[Proof Sketch]
Let $\dta=(\cstates,\alphabet,\dtclocks,\rules,\rabin)$ be any TRA where the alphabet $\alphabet = \{\ntaap{1}, \ntaap{2}, \cdots, \ntaap{k}\}$ and the initial mode is $\qstart$.
W.l.o.g, we consider that $\alphabet\subseteq 2^{\ap}$ for some finite set $\ap$.
This assumption is not restrictive since what $\ntaap{i}$'s concretely are is irrelevant, while the only thing that matters is that $\alphabet$ has $k$ different symbols.
We first construct the TRA $\dta' = (\cstates', \alphabet', \dtclocks, \rules',\rabin)$ where
$\cstates'   = \cstates  \cup \{ \qinit \}$ for which $\qinit$ is a fresh mode,
$\alphabet'  = \alphabet \cup \{ \ntaap{0} \}$ for which $\ntaap{0}$ is a fresh symbol and 
$\rules'     = \rules    \cup \{ \langle
            \qinit,
            \ntaap{0},
            \true,
            \dtclocks,
            \qstart
        \rangle
    \}$.
Then we construct the PTA :
\begin{compactitem}
    \item $\locs      :=  \alphabet'$, $\loc^*     :=  \ntaap{0} $, $\clocks    :=  \emptyset $ and $\acts      :=  \alphabet $;
    \item $\inv(\ntaap{i})              :=  \true
                                            \text{ for }
                                            \ntaap{i} \in \locs$;
    \item $\enab(\ntaap{i},\ntaap{j})   :=  \true
                                            \text{ for }
                                            \ntaap{i} \in \locs
                                            \text{ and }
                                            \ntaap{j} \in \acts$;
    \item $\prob(\ntaap{i},\ntaap{j})$ is the Dirac distribution at $(\emptyset,\ntaap{j})$ (i.e., $\prob(\ntaap{i},\ntaap{j})(\emptyset,\ntaap{j})=1$ and $\prob(\ntaap{i},\ntaap{j})(X,b)=0$ whenever $(X,b)\ne(\emptyset,\ntaap{j})$),
                                            \text{ for }
                                            $\ntaap{i} \in \locs$
                                            \text{ and }
                                            $\ntaap{j} \in \acts$;
    \item $\lbfunc(\ntaap{i})           :=  \ntaap{i}
                                            \text{ for } \ntaap{i} \in \locs$.
\end{compactitem}
Note that we allow no clocks in the construction since clocks are irrelevant for our result.
Since we omit clocks, we also treat states (of $\pta'$) as single locations.
One can prove that $\tra$ accepts all time-divergent timed words over $\Sigma$ with initial mode $\qstart$ iff
the minimal probability that $\pta'$ observes $\dta'$ with initial mode $\qinit$ equals $1$.
For details see Appendix~\ref{app:ptatraundecidability}. \qed
\end{proof}

\begin{remark}
Theorem~\ref{thm:traundecidability} shows that the problem to qualitatively decide the minimal probability is undecidable.
On the other hand, the decidability of the problem to decide maximum acceptance probabilities is left open.
\end{remark}

\vspace{-2em}
\section{Case Studies}\label{sect:casestudies}
\vspace{-0.8em}
In this section, we investigate two case studies which are simplified from real-world problems.
The first case is to complete a sequence of tasks, each having a failure probability and a processing time.
The second case is robot navigation in which a robot is given the command to reach certain destination in an area.
\vspace{-0.8em}
\subsection{Task Completion}
\vspace{-0.8em}
The {\sc Task-Completion} problem is to evaluate how well a sequence of tasks is finished.
In our setting, a task is always processed within a time frame. The exact processing time is nondeterministic.
After the processing, the task may fail to complete w.r.t certain probability.
Tasks are executed in the order where they appear in the sequence and need to be reprocessed if they fail to complete.
Example~\ref{ex:taskcompletion} illustrates a simple setting where there are only two tasks.
\vspace{-0.8em}
\begin{example}\label{ex:taskcompletion}
Consider the PTA depicted in Figure~\ref{fig:taskcompletion}.
In the figure, $\loc_i$'s ($1\le i\le 3$) are locations and $x$ is the only clock.
Below each location first comes (vertically) its invariant condition and then the set of labels assigned to the location. For example, $\inv(\loc_0)=x\le 2$ and $\lbfunc(\loc_0)=\{\alpha\}$.
The two dot points together with corresponding arrows refer to two actions and their enabling conditions and probability transition functions.
For example, the first dot at the right of $\loc_0$ refers to an action whose name is irrelevant, the enabling condition for this action (from $\loc_0$) is $1\le x\wedge x\le 2$ (cf. the dashed arrow emitting from $\loc_0$),
and the probability distribution for this action is to reset $x$ and go to $\loc_1$ with probability $0.9$ and
to reset $x$ and go to $\loc_0$ with probability $0.1$.
The PTA models a sequential completion of two tasks, where the atomic propositions $\alpha,\beta$ are used to distinguish adjacent tasks in sequential order. For the first task (indicated by the location $\loc_0$, the PTA can complete it with probability $0.9$, and the processing time is always between $1$ and $2$ time units. For the second task (indicated by the location $\loc_1$), the PTA completes it with probability $0.8$, and the completion time is always between $2$ and $3$ time units. The location $\loc_2$ signifies that all the tasks are completed.
We omit enabling conditions and probability distributions emitting from $\loc_2$ as they are irrelevant (e.g., they can encode a self-loop at $\loc_2$).
The invariant conditions $x\le 2$ and $x\le 3$ are introduced in order to prevent schedulers from repeatedly choosing time elapse. \qed
\end{example}
\vspace{-0.8em}

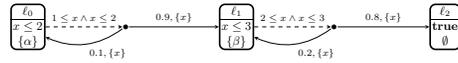
\begin{figure}
\vspace{-0.5em}
\centering
\resizebox{.5\textwidth}{!}{
\begin{tikzpicture}[x = 1.7cm]
\node[location] (task1)         at (0,0)
{
\begin{tabular}{c}
$\loc_0$\\
\hline
$x\le 2$ \\
$\{\alpha\}$
\end{tabular}
};

\node[dec] (dec1)                    at (1.5,0)  {$\bullet$};

\node[location] (task2)         at (3.2,0)
{
\begin{tabular}{c}
$\loc_1$\\
\hline
$x\le 3$ \\
$\{\beta\}$
\end{tabular}
};

\node[dec] (dec2)                    at (4.7,0)  {$\bullet$};

\node[location] (finished)         at (6.4,0)
{
\begin{tabular}{c}
$\loc_2$\\
\hline
$\true$ \\
$\emptyset$
\end{tabular}
};

\draw[tran,dashed]    (task1)    to node[auto, font=\scriptsize] {$1\le x\wedge x\le 2$}    (dec1);
\draw[tran,bend left] (dec1)     to node[auto, font=\scriptsize] {$0.1,\{x\}$}     (task1);
\draw[tran]           (dec1)     to node[auto, font=\scriptsize] {$0.9, \{x\}$}    (task2);
\draw[tran,dashed]    (task2)    to node[auto, font=\scriptsize] {$2\le x\wedge x\le 3$}     (dec2);
\draw[tran,bend left] (dec2)     to node[auto, font=\scriptsize] {$0.2,\{x\}$}     (task2);
\draw[tran]           (dec2)     to node[auto, font=\scriptsize] {$0.8, \{x\}$}    (finished);
\end{tikzpicture}
}
\caption{A Task-Completion Example}
\label{fig:taskcompletion}
\vspace{-1em}
\end{figure}

A simple specification for {\sc Task-Completion} problem is that all tasks should be finished within a given amount of time with probability at least some given number.
We consider DTA-specifications which can express also the maximal completion time over individual tasks.
Example~\ref{ex:taskcompletiondta} explains this idea.

\begin{example}\label{ex:taskcompletiondta}
Consider the DTA depicted in Figure~\ref{fig:dtataskcompletion} which works as a specification for the PTA in Example~\ref{ex:taskcompletion}.
$q_i$'s ($1\le i\le 4$) are modes with $q_3$ being the final mode, $y,z$ are clocks and arrows between modes are rules.
For example, there are two rules emitting from $q_1$, one is $(q_1, \{\beta\}, y\le 3, \{y\}, q_2)$ and the other is
$(q_1, \{\alpha\}, \true,\emptyset, q_1)$.
$q_0$ is the initial mode to read the label of the initial location of a PTA in the product construction, and
$q_3$ is the final mode.
Note that this DTA does not satisfy the totality condition. However, this can be remedied by adding rules leading to a deadlock mode without changing the acceptance behaviour of the DTA.
In the product construction with the PTA in Example~\ref{ex:taskcompletion}, $y$ records completion time of individual tasks and $z$ records completion time of both tasks.
The specification then says that the PTA should complete the first task in $3$ time units (by $y\le 3$), the second task in $4$ time units (by $y\le 4$), and all the tasks in $6$ time units (by $z\le 6$).\qed
\end{example}

\begin{figure}
\centering
\resizebox{.5\textwidth}{!}{
\begin{tikzpicture}[x = 3.8cm]
\node[mode,initial] (q0)         at (0.5,0)  {$\dtloc_0$};
\node[mode] (q1)         at (1.3,0)  {$\dtloc_1$};
\node[mode] (q2)         at (2,0)  {$\dtloc_2$};
\node[mode,accepting] (q3)         at (3,0)  {$\dtloc_3$};

\draw[tran] (q0)       to node[auto, font=\scriptsize] {$\{\alpha\}, \true, \{y\}$}      (q1);
\draw[tran, loop above] (q1)       to node[auto, font=\scriptsize] {$\{\alpha\}, \true, \emptyset$}  (q1);
\draw[tran] (q1)       to node[auto, font=\scriptsize] {$\{\beta\}, y\le 3,  \{y\}$} (q2);
\draw[tran, loop above] (q2)       to node[auto, font=\scriptsize] {$\{\beta\}, \true, \emptyset$}  (q2);
\draw[tran] (q2)       to node[auto, font=\scriptsize] {$\emptyset, y\le 4\wedge z\le 6,  \emptyset$} (q3);

\end{tikzpicture}
}
\caption{A DTA Specification for Example~\ref{ex:taskcompletion}}
\label{fig:dtataskcompletion}
\end{figure}
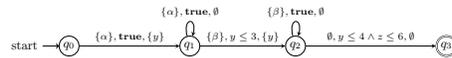
\subsection{Robot Navigation}
This case study is motivated from~\cite{DBLP:conf/tacas/BarbotCHKM11}.
In this case study, a robot is given the task to reach a destination in an unknown area.
Since the area is unknown,
the strategy the robot takes is to traverse the area randomly until the destination is reached.
Example~\ref{ex:robotnavigation} illustrates a simple setting on a $3$-by-$2$ grid.

\begin{example}\label{ex:robotnavigation}
Consider the robot-navigation problem depicted in Figure~\ref{fig:robotnavigation}.
On each tile, the time taken by a robot to leave the tile is always between $2$ and $3$ time-units.
The tile filled with black is an obstacle which cannot be entered.
The task for a robot is to start from the left-down corner of the $3$-by-$2$ grid, and move uniform-randomly to adjacent tiles excluding the obstacle until the upright corner is reached.
We assume that the robot does not always succeed to leave a tile, and the probability to successfully leave a tile is $0.9$.
The PTA modelling this problem is depicted in Figure~\ref{fig:ptarobotnavigation},
for which $x$ is the clock to measure the dwell-time on each tile, $\alpha,\beta$ are atomic propositions that distinguish adjacent tiles, and the way that the PTA is depicted is the same as for Example~\ref{ex:taskcompletion} and Figure~\ref{fig:taskcompletion}.
Each location $\loc_{i,j}$ corresponds to the situation that the robot stands in the tile $(i,j)$ (viewed as a coordinate in a two-dimensional plane) of the original grid.
The location $\loc_{2,1}$ is labelled $\emptyset$ to signify the destination.
Same as in Example~\ref{ex:taskcompletion}, we elaborate invariant conditions to disallow schedulers from repeatedly choosing time elapses. \qed
\end{example}

\begin{figure}
\begin{minipage}{0.3\textwidth}
\centering
~~\\
~\\
~\\
\scalebox{0.8}[0.8]{
\begin{tikzpicture}[x = 1cm]
\begin{scope}
    \draw (0, 1) grid (3, 3);

    \setcounter{row}{1}
    \setrow{}{}{}
    \setrow{}{}{}
\end{scope}

\fill[black] (1,2) rectangle (2,3);

\end{tikzpicture}
}
\caption{A Robot Navigation}
\label{fig:robotnavigation}
\end{minipage}
\begin{minipage}{0.6\textwidth}
\centering
\scalebox{0.5}[0.5]{
\begin{tikzpicture}[x=2cm, y=2cm]

\node[location] (q00)         at (0,0)
{
\begin{tabular}{c}
$\loc_{0,0}$\\
\hline
$x\le 3$ \\
$\{\beta\}$
\end{tabular}
};

\node           (dec00)       at (1, 1) {$\bullet$};

\node[location] (q01)         at (0,2)
{
\begin{tabular}{c}
$\loc_{0,1}$\\
\hline
$x\le 3$ \\
$\{\alpha\}$
\end{tabular}
};

\node           (dec01)       at (0, 1) {$\bullet$};

\node[location] (q10)         at (2,0)
{
\begin{tabular}{c}
$\loc_{1,0}$\\
\hline
$x\le 3$ \\
$\{\alpha\}$
\end{tabular}
};

\node (dec10) at (2,-1) {$\bullet$};

\node[location] (q20)         at (4,0)
{
\begin{tabular}{c}
$\loc_{2,0}$\\
\hline
$x\le 3$ \\
$\{\beta\}$
\end{tabular}
};

\node (dec20)  at (3,1) {$\bullet$};

\node[location] (q21)         at (4,2)
{
\begin{tabular}{c}
$\loc_{2,1}$\\
\hline
$\true$ \\
$\emptyset$
\end{tabular}
};

\node (sp1) at (0.9,0.4)  {{\scriptsize $2\le x\wedge x\le 3$}};
\node (sp2) at (3.1,0.4)  {{\scriptsize $2\le x\wedge x\le 3$}};
\node (dec00q00) at (0.4,0.9)  {{\scriptsize $0.1, \{x\}$}};
\node (dec01q01) at (0.35, 1.2)  {{\scriptsize $0.1, \{x\}$}};
\node (dec00q10) at (1.6, 0.8) {{\scriptsize $0.45, \{x\}$}};


\draw[tran]               (dec00)       to node[above right, font=\scriptsize]  {$0.45, \{x\}$}     (q01);
\draw[tran]               (dec00)       to      (q10);
\draw[tran,bend right=20] (dec00)       to      (q00);

\draw[tran]               (dec01)       to node[left, font=\scriptsize]  {$0.9, \{x\}$}               (q00);
\draw[tran,bend right=20] (dec01)       to                (q01);

\draw[tran,bend left=20]   (dec10)       to node[below left, font=\scriptsize]  {$0.45, \{x\}$}             (q00);
\draw[tran,bend right=20]  (dec10)       to node[below right, font=\scriptsize]  {$0.45, \{x\}$}             (q20);
\draw[tran,bend left=20]   (dec10)       to node[left, font=\scriptsize]  {$0.1, \{x\}$}           (q10);

\draw[tran]  (dec20)       to node[above left, font=\scriptsize]  {$0.45, \{x\}$}     (q10);
\draw[tran]  (dec20)       to node[auto, font=\scriptsize]  {$0.45, \{x\}$}     (q21);
\draw[tran, bend left=20]  (dec20)       to node[auto, font=\scriptsize]  {$0.1, \{x\}$}     (q20);

\draw[tran, dashed]  (q00)  to   (dec00);
\draw[tran, dashed]  (q01)  to node[left, font=\scriptsize]  {$2\le x\wedge x\le 3$}  (dec01);
\draw[tran, dashed]  (q10)  to node[auto, font=\scriptsize]  {$2\le x\wedge x\le 3$}  (dec10);
\draw[tran, dashed]  (q20)  to   (dec20);

\end{tikzpicture}
}
\caption{The PTA for Example~\ref{ex:robotnavigation}}
\label{fig:ptarobotnavigation}
\end{minipage}
\end{figure}

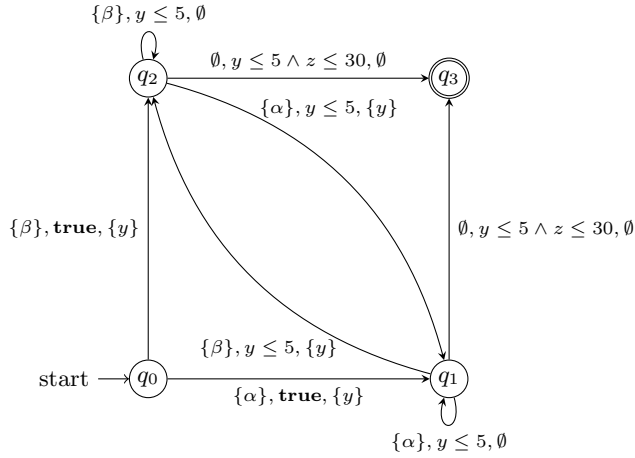
\begin{figure}
\centering
\begin{tikzpicture}[x = 4cm, y=4cm]

\node[mode,initial] (q0)         at (0,0)  {$\dtloc_0$};
\node[mode] (q1)         at (1,0)  {$\dtloc_1$};
\node[mode] (q2)         at (0,1)  {$\dtloc_2$};
\node[mode,accepting] (q3)         at (1,1)  {$\dtloc_3$};

\node (sp1) at (0.4, 0.1)  {{\scriptsize $\{\beta\}, y\le 5, \{y\}$}};
\node (sp2) at (0.6, 0.9) {{\scriptsize $\{\alpha\}, y\le 5, \{y\}$}};

\draw[tran] (q0)       to node[below, font=\scriptsize] {$\{\alpha\}, \true, \{y\}$}      (q1);
\draw[tran] (q0)       to node[left, font=\scriptsize] {$\{\beta\}, \true, \{y\}$}      (q2);
\draw[tran, loop below] (q1)      to node[auto, font=\scriptsize] {$\{\alpha\}, y\le 5, \emptyset$}  (q1);
\draw[tran, bend left]  (q1)      to    (q2);
\draw[tran, loop above] (q2)      to node[auto, font=\scriptsize] {$\{\beta\}, y\le 5, \emptyset$}   (q2);
\draw[tran, bend left]  (q2)       to    (q1);
\draw[tran] (q1)       to node[right, font=\scriptsize] {$\emptyset, y\le 5\wedge z\le 30,  \emptyset$} (q3);
\draw[tran] (q2)       to node[above, font=\scriptsize] {$\emptyset, y\le 5\wedge z\le 30,  \emptyset$} (q3);

\end{tikzpicture}
\caption{A DTA Specification for Example~\ref{ex:robotnavigation}}
\label{fig:dtarobotnavigation}
\end{figure}

Similar to the previous case study, we consider the specification that stress timing constraints on both dwell-time in individual tiles and total time to reach the destination for the robot.
\vspace{-0.5em}
\begin{example}
The DTA depicted in Figure~\ref{fig:dtarobotnavigation} specifies a property for the robot navigation described in Example~\ref{ex:robotnavigation}.
The way to render this DTA is the same as for Example~\ref{ex:taskcompletiondta}.
$q_0$ is the initial mode which reads the label of the initial tile where the robot lies and $q_3$ is the final mode.
The clock $y$ measures dwell-time on an individual tile, and the clock $z$ measures the total time to destination.
The property says that the robot should (i) never dwell on an individual tile more than $5$ time units (cf. the clock constraint $y\le 5$ and atomic propositions $\alpha,\beta$ that distinguishes adjacent tiles), and (ii) reach the upright corner within 30 time units (cf. the clock constraint $z\le 30$).\qed
\end{example}
\vspace{-0.8em}
\vspace{-1.2em}
\section{Experimental Results}
\vspace{-1em}
In this section, we perform experiments over the two case studies in Section~\ref{sect:casestudies} using the PRISM model checker~\cite{DBLP:conf/cav/KwiatkowskaNP11}.
Based on Corollary~\ref{crly:opt}, we first implement the \emph{second version} of the product construction, then use the stochastic-game engine of PRISM to compute reachability probabilities.
The experiments are executed on a Intel Xeon E5-2680 v4*2 with 14 cores and 128GB memory.
We note that although our processor has 14 cores, only one core is used since PRISM runs only on one core.

The experiments for Task-Completion are carried out by investigating our approach over instances with different number $N$ of tasks, while the experiments for Robot-Navigation are carried out with different grid sizes $N\times N$ (with random obstacles).
For Task-Completion, we consider $N$ tasks where each is guaranteed to be processes in $2$ to $3$ time units, with however randomly generated probabilities for successful completion from $0.81,0.82,\dots,0.95$; the DTA in Figure~\ref{fig:dtataskcompletion} is then adapted to have $N+2$ modes where (i) all appearances of $y\le 3$, $y\le 4$ are uniformly changed to $y\le 9$ and (ii) $z\le 6$ is changed to $z\le 6\cdot N$.
For Robot-Navigation, we following the setting in Example~\ref{ex:robotnavigation} but change the size of the grid to be $N\times N$; we still use the DTA in Figure~\ref{fig:dtarobotnavigation} with the differences that we change $y\le 5$ to $y\le 9$ and $z\le 30$ to $z\le 15\cdot N$.

The experimental results are illustrated in Table~\ref{tab:expmresults}.
We compute both the minimum acceptance probability  $\inf_\sigma \pr{\dtloc,F}{\sigma}$ and the maximum one $\sup_\sigma  \pr{\dtloc,F}{\sigma}$, which are indicated by the columns ``$p_{\min}$'' and ``$p_{\max}$'', respectively.
For Task-Completion, the column ``$N$'' indicates the number of tasks, while for Robot-Navigation, this column indicates that the grid size is $N\times N$. The column ``Size'' is the number of states in the forward reachability graph generated by the stochastic-game engine. The column ``$T_{\min}$'' (resp. ``$T_{\max}$'') is the total execution time for $p_{\min}$ (resp. $p_{\max}$) of the stochastic-game engine in PRISM measured in seconds, including the time for the product construction.
The probability values for $p_{\min}$ and $p_{\max}$ are truncated up to $10^{-4}$, while the values for $T_{\min}$ and $T_{\max}$ are truncated up to $10^{-2}$.
We stop at $N=10$ for Robot-Navigation since the total running time is already nearly 4 hours.
One can observe that the probability values for Robot-Navigation decrease drastically when $N$ increases as it is more difficult for a random-walking robot to escape the grid when the grid size is larger.

\begin{table}
\vspace{-1em}
\caption{Experimental Results for Task-Completion and Robot-Navigation}
\label{tab:expmresults}
\centering
\begin{tabular}{|c|c|c|c|c|c||||c|c|c|c|c|c|}
\hline
\multicolumn{6}{|c||||}{Task-Completion} & \multicolumn{6}{|c|}{Robot-Navigation}\\
\hline
$N$ & Size & $T_{\min}$ & $p_{\min}$ & $T_{\max}$ & $p_{\max}$ & $N$ & Size & $T_{\min}$ & $p_{\min}$ & $T_{\max}$  & $p_{\max}$ \\
\hline
5& 3385 & 0.85s & 0.9960 & 1.37s & 0.9995 & 4& 1971 & 3.62s & 0.2166 & 8.96s & 0.3799  \\
\hline
7& 11652 & 3.36s & 0.9854 & 3.71s & 0.9977 & 5& 3390 & 14.43s & 0.1672 & 37.75s & 0.3146 \\
\hline
9& 29294 & 2.68s & 0.9779 & 6.44s & 0.9966 & 6& 7368 & 77.92s & 0.0724 & 260.08s & 0.1753 \\
\hline
11& 62313 & 10.51s & 0.9731 & 13.87s & 0.9957 & 7& 9873 & 271.30s & 0.0473 & 733.48s & 0.1233 \\
\hline
13& 118139 & 27.92s & 0.9667 & 34.84s & 0.9946 & 8& 18131 & 707.03s & 0.0200 & 2281.55s & 0.0607 \\
\hline
15& 203845 & 56.07s & 0.9699 & 82.23s & 0.9957 & 9& 21564 & 1648.52s & 0.0092 & 4427.92s & 0.0371 \\
\hline
17& 330462 & 95.763s & 0.9617 & 124.87s & 0.9939 & 10& 36210 & 3219.61s & 0.0076 & 10158.55s & 0.0329 \\
\hline
19& 508880 & 175.04s & 0.9541 & 232.17s & 0.9928 & - & - & - & - & - & -\\
\hline
20& 620173 & 221.06s & 0.9507 & 227.38s & 0.9920 & - & - & - & - & - & - \\
\hline
\end{tabular}
\vspace{-1em}
\end{table}

\vspace{-1.8em}
\section{Conclusion}
\vspace{-1em}

In this paper, we studied the problem of model-checking PTAs against timed-automata specifications.
We considered Rabin acceptance condition as the acceptance criterion.
We first solved the problem with deterministic-timed-automata specifications through a product construction and proved that its computational complexity is EXPTIME-complete.
Then we proved that the problem with general timed-automata specifications is undecidable through a reduction from the universality problem of timed automata.

A future direction is zone-based algorithms for Rabin acceptance condition.
Another direction is to investigate timed-automata specifications with cost or reward.
Besides, it is also interesting to explore model-checking PTAs against Metric Temporal Logic~\cite{DBLP:journals/rts/Koymans90}.

\vspace{-1.3em}
\section{Related Works}
\vspace{-0.8em}

Model-checking TAs or MDPs against omega-regular (dense-time) properties is well-studied (cf.~\cite{DBLP:books/daglib/0020348,DBLP:conf/lics/OuaknineW05,DBLP:conf/arts/Vardi99}, etc.).
PTAs extend both TAs and MDPs with either probability or timing constraints,
hence require new techniques for verification problems.
On one hand, our technique extends techniques for MDPs (e.g.~\cite{DBLP:conf/arts/Vardi99}) with timing constraints.
On the other hand, our technique is incomparable to techniques for TAs since linear-time model checking of TAs focus mostly on proving decidability of temporal logic formulas (e.g. Metric Temporal logic~\cite{DBLP:journals/rts/Koymans90,DBLP:journals/jacm/AlurFH96,DBLP:conf/lics/OuaknineW05}),
while we prove that model-checking PTAs against TA-specifications is undecidable.

Model-checking probabilistic timed models against linear dense-time properties
are mostly considered for continuous-time Markov processes (CTMPs).
First, Donatelli~\emph{et al.}~\cite{DBLP:journals/tse/DonatelliHS09} proved an expressibility result that the class of linear dense-time properties encoded by DTAs is not subsumed by branching-time properties.
They also demonstrated an efficient algorithm for verifying continuous-time Markov chains~\cite{DBLP:journals/tse/DonatelliHS09} against one-clock DTAs.
Then various results on verifying CTMPs are obtained for specifications through DTAs and general timed automata (cf. e.g.~\cite{DBLP:journals/tse/DonatelliHS09,DBLP:journals/corr/abs-1101-3694,DBLP:conf/hybrid/Fu13,DBLP:conf/hybrid/BrazdilKKKR11,DBLP:conf/tacas/BarbotCHKM11,DBLP:conf/formats/BortolussiL15}).
The fundamental difference between CTMPs and PTAs is that the former assign probability distributions to time elapses, while the latter treat time-elapses as pure nondeterminism.
As a consequence, the techniques for CTMPs cannot be applied to PTAs.

For PTAs, the only relevant result is by Sproston~\cite{DBLP:conf/qest/Sproston11} who developed an approach for verifying PTAs against deterministic discrete-time omega-regular automata by a similar product construction.
Our results extend his result in two ways.
First, our product construction has the extra ability to tackle timing constraints from the DTA.
The extension is nontrivial since it needs to resolve the integration between randomness (from the PTA) and timing constraints (from the DTA), and still ensures the EXPTIME-completeness of the problem, matching the computational complexity in the discrete-time case \cite{DBLP:conf/qest/Sproston11}.
Second, we have proved an undecidability result in the case of general nondeterministic timed automata, thus extending \cite{DBLP:conf/qest/Sproston11} with nondeterminism.

\vspace{-1.2em}
\section{Acknowledge}
\vspace{-1em}

Acknowledgement
This work has been supported by by the National Natural Science Foundation of
China (Grants 61761136011, 61532019, 61472473).

Thanks reviewers for detailed review comments.
\vspace{-1.2em}
\bibliographystyle{splncs}
\bibliography{pta-dta}

\clearpage
\appendix

\section{Proofs for Technical Results}\label{app:proof}


\noindent{\bf Definition~\ref{def:accprob}.}
The probability that $\pta$ \emph{observes} $\tra$ under scheduler $\sigma$ and initial mode $\dtloc\in\cstates$, denoted by $\pr{\dtloc}{\sigma}$, is defined by:
\[
    \pr{\dtloc}{\sigma}
        :=
            \probm^{\pta,\sigma}(
                \LangCsAqF
            )
\]
where $\LangCsAqF$ is the set of infinite paths under $\pta$ that are accepted by the TRA $\tra$ w.r.t $(\dtloc,\zero)$ i.e.
\begin{small}
\begin{align*}
    &
    \LangCsAqF = 
        \left \{
            \infpath \in \infpaths{\pta,\sigma} \mid
            \acccept
                {\tra}
                {(\dtloc,\zero)}
                {\infpath}
        \right\}
        \\
        \bigcup_{i=1}^{n} (
            &
            \bigcup_{m \in \Nset}
            \bigcap \left \{
                \overline{\cyl(\fnpath)}
                \mid
                \fnpath \in \fnpaths{\pta,\sigma}
                \mbox{,}
                m \le \length{\fnpath}
                \mbox{,}
                \lastloc{
                    \traj{
                        \run{\tra}{\iconfig}{\lbfunc(\fnpath)}
                    }
                } \in H_i
            \right \}
            \\
            \cap
            &
            \bigcap_{m \in \Nset}
            \bigcup \left \{
                \cyl(\fnpath)
                \mid
                \fnpath \in \fnpaths{\pta,\sigma}
                \mbox{,}
                m \le \length{\fnpath}
                \mbox{,}
                \lastloc{
                    \traj{
                        \run{\tra}{\iconfig}{\lbfunc(\fnpath)}
                    }
                } \in K_i
            \right \}
        )
\end{align*}
\end{small}
where
$
    \dtloc^*
        =
            \trfunc \left(
                (\dtloc,\zero),
                \lbfunc (
                    \initloc{\fnpath}
            \right)
$.

Since the set $\fnpaths{\pta,\sigma}$ is countably-infinite,
$\LangCsAqF$ is measurable since it can be represented as a countable intersection of certain countable unions of some cylinder sets (cf.~\cite[Remark 10.24]{DBLP:books/daglib/0020348} for details).

\smallskip
\noindent{\bf Lemma~\ref{lemm:pfuncbij}.}
The function $\pfunc$ is a bijection. Moreover, for any infinite path $\infpath$ under $\pta$, $\infpath$ is non-zeno iff $\pfunc(\infpath)$ is non-zeno.
\begin{proof}
The first claim follows directly from the determinism and totality of DTAs.
The second claim follows from the fact that $\pfunc$ preserves time elapses in the transformation.
\end{proof}

\noindent{\bf Proposition~\ref{prop:psfunc}.}
For any scheduler $\sigma$ for $\pta$ and any initial mode $q$ for $\dta$, we have $\TLang = \TAcc.$
\begin{proof}
By definition, the set $\LangCsAqF$ equals
$$
    \left \{
        \infpath \in \infpaths{\pta,\sigma} \mid
        \accept{
            \infset{
                \traj{
                    \xi_\pi
                }
            }
        }   {
            \rabin
        }
    \right\}
$$
where $\xi_\pi$ is the unique run of $\dta$ on $\lbfunc(\pi)$ with initial configuration $(\dtloc^*,\zero)$ for which
$\dtloc^*$ is the unique location such that $\dtatr{(q,\zero)}{\lbfunc(\loc^*)}{(\dtloc^*,\zero)}$.
Let $\infpath = (\loc_0,\nu_0) a_0 (\loc_1,\nu_1) a_1 \dots $ be any infinite path under $\pta$.
By the definition of $\pfunc$ we have
$$
    \pfunc( \infpath ) =
        ((\loc_0,\dtloc_0),\nu_0\cup\mu_0)
        a'_0
        ((\loc_1,\dtloc_1),\nu_1\cup\mu_1)
        a'_1
        \dots
$$
in the form (\ref{eq:trinfpath}) such that $\xi_\pi=\{(\dtloc_n,\mu_n,b_{n})\}_{n\in\Nset_0}$ is the unique run
on $\lbfunc(\infpath)=b_0b_1\dots$ . Moreover, $\pi$ follows $\sigma$ iff $\pfunc( \infpath )$ follows $\sfunc(\sigma)$ by definition.
Then it is obvious that
$$\trace{ \pfunc( \infpath ) }
    = q_0 q_1 \dots
    = \traj{
        \xi_\infpath
    }.
$$
It follows that
$\infset {
    \trace {
        \pfunc( \infpath )
    }
}$
is Rabin accepting by $\rabin$ iff
$
    \infset{
        \traj {
            \xi_\infpath
        }
    }
$
is Rabin accepting by $\rabin$. Hence the result follows from Lemma~\ref{lemm:pfuncbij}.
\end{proof}
\smallskip
\noindent{\bf Proposition~\ref{thm:main}.}
For any scheduler $\sigma$ for $\pta$ and mode $q$, the followings hold:
\begin{compactitem}
\item
{\small $
    \pr
        {\dtloc}
        {\sigma}
        =
            \probm
                ^{\pta,\sigma}
                \left(
                    \LangCsAqF
                \right)
        =
            \probm
                ^{\product{\pta}{\dta_\dtloc},\theta(\sigma)}
                \left(
                    \TAcc
                \right)
    ;
$}
\item
{\small$
    \probm
        ^{\pta,\sigma}
        \left( \{
                \infpath \mid \infpath \mbox{ is zeno}
            \}
        \right)
    =
    \probm
        ^{\product{\pta}{\dta_\dtloc},\theta(\sigma)}
        \left( \{
                \infpath' \mid \infpath' \mbox{ is zeno}
            \}
        \right).
$}
\end{compactitem}
\begin{proof}
Define the probability measure $\probm'$ by: $\probm'(A)=\probm^{\product{\pta}{\dta_\dtloc},\sfunc\left(\sigma\right)}(\pfunc(A))$ for $A\in\mathcal{F}^{\pta,\sigma}$. We show that $\probm'=\probm^{\pta,\sigma}$. By \cite[Theorem 3.3]{PBMeasure}, it suffices to consider cylinder sets as they form a pi-system (cf. \cite[Page 43]{PBMeasure}).
Let $\fnpath=(\loc_0,\nu_0)a_0\dots a_{n-1}(\loc_n,\nu_n)$ be any finite path under $\pta$.
By definition, we have that
\begin{align*}
    \probm^{\pta,\sigma}(\cyl(\fnpath))
        & =
        \probm^{\product{\pta}{\dta_\dtloc}, \sfunc\left(\sigma\right)}(\cyl(\pfunc(\fnpath)))
        \\
        & =
        \probm^{\product{\pta}{\dta_\dtloc},\sfunc\left(\sigma\right)}(\pfunc(\cyl(\fnpath)))
        \\
        & =
        \probm'(\cyl(\fnpath))\enskip.
\end{align*}
The first equality comes from the fact that the product construction preserves transition probabilities. The second equality is due to $\cyl(\pfunc(\fnpath))= \pfunc(\cyl(\fnpath))$.
The final equality follows from the definition.
Hence $\probm^{\pta,\sigma}=\probm'$.
Then the first claim follows from Proposition~\ref{prop:psfunc} and the second claim follows from Lemma~\ref{lemm:pfuncbij}.
\end{proof}

\section{The Hardness Result}\label{app:hardness}

Below we prove the hardness of the PTA-DTRA problem. It is proved in \cite{LaroussinieS07} that the reachablity-probability problem for arbitrary PTAs is \emph{EXPTIME}-complete.
We show a polynomial-time reduction from the PTA reachibility problem to the PTA-DTRA problem as follows.
For an arbitrary PTA $\pta$ in the form (\ref{eq:pta})
and a set $\fstates\subseteq\locs$ of final locations,
let $\pta'=\left(\locs, \loc^*, \clocks, \acts, \inv, \enab,  \prob, \ap',\lbfunc'\right)$ where $\ap':=\ap\cup\{\mbox{\sl acc}\}$ and
$\lbfunc'$ is defined by
\begin{displaymath}
    \mathcal{L}'(l):=\left\{
    \begin{array}{cc}
        \mathcal{L}(l) & \mbox{ if } \loc\not\in\fstates\\
        \mathcal{L}(l)\cup\{\mbox{\sl acc}\} & \loc\in\fstates
    \end{array}
    \right.
\end{displaymath}
for which $\mbox{\sl acc}$ is a fresh atomic proposition.
We also construct the DTRA $\dta'$ by
\[
\dta':=\left(\{q_0,q_1\},\alphabet,\emptyset,\rules,\{(\emptyset, \{q_1\})\}\right)
\]
where $\alphabet:=\{\lbfunc'(\loc)\mid \loc\in\locs\}$ and $\rules$ contains exactly the following rules:
\begin{compactitem}
    \item  $(q_0,U,\true, \emptyset,q_1)\in\rules$ for all $U\in\alphabet$ such that $\mbox{acc}\in U$;
    \item  $(q_0,U,\true, \emptyset,q_0)\in\rules$ for all $U\in\alphabet$ such that $\mbox{acc}\not\in U$;
    \item $(q_1,U,\true, \emptyset,q_1)\in\rules$ for all $U\in\alphabet$.
\end{compactitem}
It is then straightforward from definition that an infinite path under $\pta$ visits some location in $\fstates$ iff the infinite path (under $\pta'$) is accepted by $\dta'$ under initial mode $\dtloc_0$.
Hence, under any scheduler (for both $\pta$ and $\pta'$), the probability to reach $\fstates$ in $\mathcal{C}$ equals the probability that
$\pta'$ observes $\dta'$ under initial mode $\dtloc_0$.
It follows that the problem to compute the maximum/minimum probability to reach $\fstates$ can be polynomially reduced to the PTA-DTRA problem.
Hence the problem PTA-DTRA is EXPTIME-hard.

\section{Proof for PTA-TRA Undecidability}\label{app:ptatraundecidability}

\noindent{\bf Theorem~\ref{thm:traundecidability}.}
Given a PTA $\pta$ and a TRA $\dta$, the problem to decide whether the minimal probability
that $\pta$ \emph{observes} $\dta$ (under a given initial mode) is equal to $1$ is undecidable.
%
\begin{proof}
Let $\dta=(\cstates,\alphabet,\dtclocks,\rules,\rabin)$ be any TRA where the alphabet $\alphabet = \{\ntaap{1}, \ntaap{2}, \cdots, \ntaap{k}\}$ and the initial mode is $\qstart$.
W.l.o.g, we consider that $\alphabet\subseteq 2^{\ap}$ for some finite set $\ap$.
This assumption is not restrictive since what $\ntaap{i}$'s concretely are is irrelevant, while the only thing that matters is that $\alphabet$ has $k$ different symbols.
We first construct the TRA $\dta' = (\cstates', \alphabet', \dtclocks, \rules',\rabin)$ where:

\begin{compactitem}
\item $\cstates'   = \cstates  \cup \{ \qinit \}$ for which $\qinit$ is a fresh mode;
\item $\alphabet'  = \alphabet \cup \{ \ntaap{0} \}$ for which $\ntaap{0}$ is a fresh symbol;
\item $\rules'     = \rules    \cup \{ \langle
            \qinit,
            \ntaap{0},
            \true,
            \dtclocks,
            \qstart
        \rangle
    \}$.
\end{compactitem}
Then we construct the PTA
\[
\pta'=\left(\locs, \loc^*, \clocks, \acts, \inv, \enab,  \prob, \ap, \lbfunc\right)
\]
where:
\begin{compactitem}
    \item $\locs      :=  \alphabet'$, $\loc^*     :=  \ntaap{0} $, $\clocks    :=  \emptyset $ and $\acts      :=  \alphabet $;
    \item $\inv(\ntaap{i})              :=  \true
                                            \text{ for }
                                            \ntaap{i} \in \locs$;
    \item $\enab(\ntaap{i},\ntaap{j})   :=  \true
                                            \text{ for }
                                            \ntaap{i} \in \locs
                                            \text{ and }
                                            \ntaap{j} \in \acts$;
    \item $\prob(\ntaap{i},\ntaap{j})$ is the Dirac distribution at $(\emptyset,\ntaap{j})$ (i.e., $\prob(\ntaap{i},\ntaap{j})(\emptyset,\ntaap{j})=1$ and $\prob(\ntaap{i},\ntaap{j})(X,b)=0$ whenever $(X,b)\ne(\emptyset,\ntaap{j})$),
                                            \text{ for }
                                            $\ntaap{i} \in \locs$
                                            \text{ and }
                                            $\ntaap{j} \in \acts$;
    \item $\lbfunc(\ntaap{i})           :=  \ntaap{i}
                                            \text{ for } \ntaap{i} \in \locs$.
\end{compactitem}
Note that we allow no clocks in the construction since clocks are irrelevant for our result.
Since we omit clocks, we also treat states (of $\pta'$) as single locations.
Below we prove that $\tra$ accepts all time-divergent timed words over $\Sigma$ with initial mode $\qstart$ iff
the minimal probability that $\pta'$ observes $\dta'$ with initial mode $\qinit$ equals $1$.

Consider any time-divergent infinite timed word $ w = t_0 b'_0 t_1 b'_1 \cdots $ over $\Sigma$ (where $t_i\in\Rset$ and $b'_i\in\Sigma$).
We define an infinite sequence $\{\fnpath_n\}_{n\in\Nset_0}$ of finite paths (of $\pta'$) inductively as follows:
\begin{compactitem}
\item $\fnpath_0:=b_0(=\loc^*)$; (Note that we treat states as locations since clocks are irrelevant.)
\item for $m\ge 0$, $\fnpath_{2m+1}:=\left\langle s_0,a_0,s_1,\dots,a_{k-1},s_{k},t_{m}, s_{k}\right\rangle$ if $\fnpath_{2m}=\left\langle s_0,a_0,s_1,\dots,a_{k-1},s_{k}\right\rangle$;
\item for $m\ge 0$, $\fnpath_{2m+2}:=\left\langle s_0,a_0,s_1,\dots,a_{k-1},s_{k},b'_{m}, b'_m\right\rangle$ if $\fnpath_{2m+1}=\left\langle s_0,a_0,s_1,\dots,a_{k-1},s_{k}\right\rangle$.
\end{compactitem}
Intuitively, the sequence $\{\fnpath_n\}_{n\in\Nset_0}$ is constructed by letting the PTA $\pta'$ read the timed word $w$ in a stepwise fashion, while adjusting the next location upon reading a symbol (as an action) from $\Sigma$.
Then one can define a scheduler $\sigma_w$ by:
\begin{compactitem}
\item $\sigma_w(\rho_{2m}):=t_m$ for $m\ge 0$;
\item $\sigma_w(\rho_{2m+1}):=b'_{m}$ for $m\ge 0$;
\item $\sigma_w(\rho)$ is arbitrarily defined if $\rho$ is not from the sequence $\{\fnpath_n\}_{n\in\Nset_0}$.
\end{compactitem}
Intuitively, $\sigma_w$ always chooses time-delays and actions from $w$.
From definition,
$
    \probm^{\pta',\sigma_w }\left(
        \left \{\infpath\mid \lbfunc(\infpath)=w
        \right \}
    \right)
    = 1
$.
Note that $\sigma_w$ is time divergent since $w$ is time divergent.
Hence
$$
    \pr
        {\qinit}
        {\sigma_w}
        =   \begin{cases}
            1 & \mbox{ if $\nta$ accepts $w$ with $(\qstart,\zero)$ },\\
            0 & \mbox{ if $\nta$ rejects $w$ with $(\qstart,\zero)$ }.
        \end{cases}
$$
where the underlying PTA (resp. TRA) is $\pta'$ (resp. $\dta'$).
Since all those $\sigma_w$'s correspond to all time divergent schedulers for $\pta'$, 
we have that
$
\inf_\sigma \probm^{\pta',\sigma}\left(
    \Lang
        {\pta',\sigma}
        {\nta',\qinit}
\right)
    = 1
$
iff
$\nta$ accepts all time-divergent timed words w.r.t. $(\qstart,\zero)$.
\end{proof}


\end{document}